\documentclass[aps,prxquantum,reprint,superscriptaddress]{revtex4-2}

\usepackage{amsmath}
\usepackage{amsthm}
\usepackage{amsfonts}
\usepackage{amssymb}
\usepackage{color}
\usepackage{dsfont}
\usepackage{graphicx}
\usepackage{latexsym}
\usepackage{mathrsfs}
\usepackage{mathtools}
\usepackage{mleftright}
\usepackage{physics2}
\usepackage{tikz}

\usetikzlibrary{arrows.meta}

\theoremstyle{plain}
\newtheorem{thm}{Theorem}
\newtheorem{cor}[thm]{Corollary}
\newtheorem{lem}[thm]{Lemma}
\newtheorem{prop}[thm]{Proposition}

\theoremstyle{definition}
\newtheorem{dfn}{Definition}

\theoremstyle{remark}
\newtheorem{rmk}{Remark}

\mleftright

\usephysicsmodule{ab,ab.legacy,braket}
\newcommand{\p}{\pab}
\newcommand{\B}{\Bab}
\newcommand{\V}{\Vab}
\renewcommand{\hat}{\widehat}
\renewcommand{\mid}{\:\middle|\:}
\renewcommand{\tilde}{\widetilde}

\begin{document}

\title{Information Thermodynamics in Generalized Probabilistic Theories}

\author{Koki Ono}
\email{ono-koki667@g.ecc.u-tokyo.ac.jp}
\affiliation{Department of Basic Science, The University of Tokyo, 3-8-1 Komaba, Meguro-ku, Tokyo 153-8902, Japan}

\author{Shun Umekawa}
\email{umeshun2003@g.ecc.u-tokyo.ac.jp}
\affiliation{Department of Physics, The University of Tokyo, 5-1-5 Kashiwanoha, Kashiwa, Chiba 277-8574, Japan}

\author{Hiroyasu Tajima}
\email{hiroyasu.tajima@inf.kyushu-u.ac.jp\\K. O. and S. U. contributed equally to this work.}
\affiliation{Department of Informatics, Faculty of Information Science and Electrical Engineering, Kyushu University, 744 Motooka, Nishi-ku, Fukuoka, 819-0395, Japan}
\affiliation{JST, FOREST, 4-1-8 Honcho, Kawaguchi, Saitama, 332-0012, Japan.}

\begin{abstract}
Generalized probabilistic theories (GPTs) provide a unified framework for describing all probabilistic physical theories, encompassing not only classical and quantum theories, but also hypothetical theories beyond quantum mechanics, by abstracting the operational structure of experiments and observations.
However, most GPTs are highly unrealistic and far removed from known physical theories, making it important to constrain them by imposing physically reasonable principles.
One of the most important such principles is consistency with thermodynamics, which has been extensively studied through toy models involving semipermeable membranes (SPMs) implementing measurements.
On the other hand, information thermodynamics, which plays a central role in understanding the relationship between measurement and thermodynamics in classical and quantum theory, has remained largely undeveloped in the context of GPTs.
In this work, we construct information thermodynamics in GPTs and provide a unified framework for analyzing the relationship between measurement, feedback, information erasure, and the second law of thermodynamics.
We also formulate a general framework for SPM models and analyze the thermodynamic cost of the measurement processes implemented by SPMs.
As a result, we show that even in GPTs, no work can be extracted in contradiction with the second law of thermodynamics as long as the measurement processes are consistent with entropy nondecrease.
We further derive sufficient conditions for measurement processes to satisfy entropy nondecrease with respect to several entropy definitions proposed in GPTs.
Moreover, by considering measurement processes that violate these conditions, we construct for the first time explicit GPT systems realizing isothermal SPM cycles from which positive work can be extracted.
These examples demonstrate that, if one assumes the existence of SPMs that implement a certain class of measurements without thermodynamic cost, violations of the second law can arise from the lack of fundamental entropy properties or from discrepancies between different entropy definitions.
These results provide a unified and model-independent foundation for understanding the relationship between thermodynamics and measurement in GPTs.
\end{abstract}

\maketitle

\section{Introduction}

Understanding the fundamental laws that govern the physical world is one of the central goals of physics.
To date, all experimentally verified physical phenomena are believed to be described by quantum theory.
Nevertheless, attempts to reconstruct the axioms of quantum theory from physically motivated principles \cite{Araki1980,Hardy2001,Masanes2011,Chiribella2011,Muller2012-1,Barnum2023} have not yet achieved complete success, and the possibility of physical theories beyond quantum theory therefore remains open.
One notable indication of this possibility is the longstanding incompatibility between quantum theory and general relativity, the latter being the current theory of classical gravity.
Despite decades of research, a consistent unification of these two frameworks has not yet been achieved, suggesting that gravity may not be fully captured within the standard framework of quantum theory \cite{Muller2012-2,Plavala2025}.

The possibility of physical theories beyond quantum theory motivates the study of generalized probabilistic theories (GPTs) \cite{Ludwig1964,Davies1970,Gudder1973,Barrett2007,Janotta2014,Plavala2023}, which provide a framework based only on the minimal operational requirements necessary for probabilistic predictions of measurement outcomes.
By abstracting the operational structure of experiments and observations, GPTs encompass both classical and quantum theories, as well as, in principle, all theories capable of probabilistically describing physical events \cite{Popescu1994,Barrett2007,Janotta2011,Jordan1934,Barnum2020,Yoshida2020,Plavala2022,Plavala2023-2,Jiang2024}.
Even if physical phenomena beyond quantum theory exist, they are expected to obey the operational structure of GPTs.

However, GPTs are extremely general and therefore possess limited predictive power.
Most GPTs are highly unrealistic and differ significantly from known physical theories such as classical and quantum theories.
It is therefore important to impose physically motivated constraints in order to narrow down the space of admissible theories.
Such an approach may enable deeper insights into the structure of physically plausible theories beyond quantum theory.

A particularly important guiding principle for restricting physically admissible theories is consistency with thermodynamics.
As Einstein remarked \cite{Einstein1979}, thermodynamics is among the most reliable physical theories, having survived the major conceptual revolutions brought about by relativity and quantum theory without requiring fundamental revision.
Therefore, GPTs that violate the laws of thermodynamics are unlikely to provide realistic descriptions of the physical world.

Motivated by this perspective, research aiming to constrain GPTs by imposing the laws of thermodynamics has mainly been carried out using the semipermeable-membrane (SPM) models \cite{Hanggi2013,Krumm2015,Krumm2017,Minagawa2022}.
In an SPM model, one considers a classical ideal gas whose internal degrees of freedom are described by a GPT system, together with semipermeable membranes (SPMs) that measure the internal states and transmit or reflect particles depending on the measurement outcomes.
The origin of SPM models can be traced back to von Neumann's thought experiment in quantum theory \cite{Neumann1927,Neumann1932,Petz2001}.
By considering the work required to separate a gas according to its internal quantum states using SPMs, he derived the von Neumann entropy as the thermodynamic entropy of quantum systems.
H\"anggi and Wehner \cite{Hanggi2013} investigated the amount of work extractable from a particular isothermal cycle in the SPM model within quantum theory and related the second law of thermodynamics to uncertainty relations.
They further extended their analysis to a certain class of GPTs, thereby paving the way for investigating the relationship between thermodynamics and GPTs.
Krumm \textit{et al.}\ \cite{Barnum2015,Krumm2015,Krumm2017,Takakura2019} were the first to investigate the consistency between thermodynamics and GPTs through an SPM model.
In GPTs, there is no canonical notion of entropy satisfying all the properties of the von Neumann entropy, and the definition of entropy is therefore not unique \cite{Short2010,Barnum2010,Kimura2010,Kimura2016}.
They derived necessary conditions that a ``thermodynamic entropy'' in a GPT must satisfy, and showed that some GPT systems admit no entropy satisfying these conditions.
They also showed that, under rather strong assumptions on the GPT system, such entropies do exist, satisfy desirable properties, and are consistent with the second law of thermodynamics.
Furthermore, Minagawa \textit{et al.}\ \cite{Minagawa2022} showed that, for GPT systems satisfying certain assumptions, positive work can be extracted from an isothermal cycle in the SPM model, leading to a contradiction with the second law of thermodynamics.

However, the SPM models represent only a highly restricted class of thermodynamic scenarios.
In classical and quantum theory, much more general thermodynamic processes involving measurements and feedback have been extensively studied \cite{Maxwell1872,Szilard1929,Brillouin1951,Gabor1961,Landauer1961,Bennett1982,Maruyama2005,Sagawa2008,Sagawa2009,Sagawa2011,Jacobs2009,Sagawa2010,Toyabe2010,Berut2012,Tajima2013,Funo2013,Parrondo2015,Morikuni2017,Abdelkhalek2018,Proesmans2020,Koshihara2022,Minagawa2025}.
It is therefore natural to expect that, also in the context of GPTs, a more comprehensive understanding of the relationship between thermodynamics and GPTs can be achieved by considering general thermodynamic processes involving measurements, rather than restricting attention to specific models such as the SPM model.

The study of such thermodynamic processes involving measurements originates from the problem of Maxwell's demon \cite{Maxwell1872}, which pointed out that allowing unrestricted measurement-based processes could lead to apparent violations of the second law of thermodynamics.
Szilard \cite{Szilard1929} clarified the essence of this problem through a simple model involving measurement and feedback, thereby suggesting a deep connection between thermodynamics and information.
Motivated by this idea, Brillouin \cite{Brillouin1951} and Gabor \cite{Gabor1961} argued that acquiring information through measurement necessarily requires thermodynamic cost, while Landauer \cite{Landauer1961} and Bennett \cite{Bennett1982} argued that erasing measurement outcomes also requires thermodynamic cost.
Sagawa and Ueda \cite{Sagawa2008,Sagawa2009,Sagawa2011} established a quantitative and unified formulation of the second law of information thermodynamics.
They first showed that, in thermodynamic processes involving measurements, the upper bound on extractable work acquires an additional contribution given by the mutual information \cite{Sagawa2008}.
They further derived the thermodynamic costs required for information acquisition and erasure, and showed that the overall process remains consistent with the second law of thermodynamics \cite{Sagawa2009,Sagawa2011}.
Although the original formulation of Sagawa and Ueda was restricted to a limited class of measurement processes, subsequent works \cite{Jacobs2009,Abdelkhalek2018,Minagawa2025} generalized the results to arbitrary quantum measurement processes \footnote{
Consider an indirect measurement process on a target system \(A\) implemented using a memory (or demon) system \(M\).
As explicitly stated in the Erratum by Sagawa and Ueda \cite{Sagawa2011}, the original works of Sagawa and Ueda \cite{Sagawa2008,Sagawa2009,Sagawa2011} were restricted to so-called efficient measurement processes for the target system \(A\).
Jacobs \cite{Jacobs2009} and Abdelkhalek \textit{et al.}\ \cite{Abdelkhalek2018} generalized, respectively, the result on extractable work from the target system \cite{Sagawa2008} and the result on the energetic costs of measurement and information erasure \cite{Sagawa2009,Sagawa2011} to arbitrary measurement processes on \(A\).
In these generalizations, the relevant information quantity is replaced from the QC mutual information \cite{Sagawa2008} to the Groenewold--Ozawa information gain \cite{Groenewold1971,Ozawa1986}.
Minagawa \textit{et al.}\ \cite{Minagawa2025} further generalized these results to situations in which the measurement performed on the memory system \(M\) is itself an arbitrary measurement process rather than a projective one.
Since, in GPTs, arbitrary measurements cannot necessarily be realized as indirect measurements implemented via projective measurements on a memory system (indeed, even the notion of projective measurement is generally nontrivial), the generalization in \cite{Minagawa2025} is particularly important for extending information thermodynamics to GPTs.
}.

However, despite these developments, information thermodynamics has not yet been formulated in a general manner within the framework of GPTs.
This constitutes a fundamental gap in our understanding of the relationship between thermodynamics and measurements in GPTs.

Moreover, the thermodynamic cost of measurements is also an important issue in the context of the SPM model.
Previous studies based on SPM models have assumed that a certain class of measurement processes implemented by SPMs can be performed without any thermodynamic cost.
However, this assumption already involves nontrivial issues even within quantum theory.
Based on physical intuition, von Neumann \cite{Neumann1932} argued that projective measurement processes that do not disturb the initial state can be carried out without thermodynamic cost, although the precise scope and rigorous justification of this claim remain unclear.
Furthermore, in GPTs, even the definition of a class of measurements analogous to projective measurements in quantum theory is generally unclear \cite{Mielnik1969,Araki1980,Gudder1999,Kleinmann2014,Umekawa2026}.
It is therefore even less evident which measurement processes, if any, can be implemented by SPMs without thermodynamic cost.

To address these issues, in this work we formulate general processes involving measurement, feedback, and information erasure in GPTs, thereby constructing a framework of information thermodynamics in GPTs.
More specifically, we derive a generalized second law of information thermodynamics for arbitrary GPT systems and show that, for any subadditive entropy, no excess work can be extracted if the measurement processes satisfy entropy nondecrease.
This result can be regarded as a generalization of the Sagawa--Ueda inequalities \cite{Sagawa2008,Sagawa2009,Sagawa2011,Jacobs2009,Abdelkhalek2018,Minagawa2025} in quantum information thermodynamics to the GPT setting.

Furthermore, we formulate a general framework for isothermal processes in SPM models and obtain a unified understanding of such processes from the viewpoint of the thermodynamic cost of measurements.
All processes considered in previous works \cite{Neumann1927,Neumann1932,Petz2001,Hanggi2013,Barnum2015,Krumm2015,Krumm2017,Minagawa2022} can be regarded as special cases of the general framework introduced in this work.
By analyzing this general class of processes, we show, similarly to the result in information thermodynamics, that no work can be extracted from an isothermal SPM cycle whenever the measurement processes implemented by SPMs satisfy entropy nondecrease with respect to a subadditive entropy.
These results demonstrate that, even for thermodynamic processes involving measurements in GPT systems, the extractable work is constrained by entropy changes, just as in ordinary thermodynamics.
Moreover, whereas most previous studies were restricted to limited classes of GPT systems or particular thermodynamic cycles, our results apply to arbitrary GPT systems and therefore possess a high degree of generality.

Several definitions of entropy in GPTs have been proposed, including the measurement entropy \cite{Short2010,Barnum2010,Kimura2010} and the accessible information entropy \cite{Kimura2010}.
In this work, we derive sufficient conditions for measurement processes to be consistent with entropy nondecrease for these entropies, and show that such processes form candidates for the class of measurement processes that can be implemented by SPMs without thermodynamic cost.
In particular, since projective measurement processes in quantum systems satisfy entropy nondecrease of the von Neumann entropy, we show that no work can be extracted from isothermal SPM cycles employing projective measurements, thereby providing a rigorous justification of von Neumann's original intuitive argument \cite{Neumann1932}.

Finally, we construct and analyze, for the first time, explicit examples of GPT systems and isothermal SPM cycles from which positive work can be extracted.
Although previous work \cite{Minagawa2022} showed that positive work extraction can occur in SPM cycles of GPT systems satisfying certain assumptions, no concrete examples of such systems had been presented.
In this work, we construct explicit SPM cycles realizing positive work extraction using GPT systems with square and regular-hexagon state spaces, and analyze the origin of the extracted work from the viewpoint of entropy changes associated with the measurement processes.
By comparing these examples with the conditions for entropy-nondecreasing measurement processes derived in this work, we show that, in GPTs, the existence of SPMs implementing measurements without thermodynamic cost can lead to violations of the second law due either to the failure of fundamental entropy properties or to discrepancies between different entropy notions.

These results provide a unified framework for understanding the relationship between thermodynamics and measurements in GPTs from a general perspective independent of particular models or physical theories.
In particular, this work explicitly incorporates the thermodynamic cost of measurement processes, which has not been sufficiently taken into account in previous studies, and derives general conditions under which measurement processes are consistent with the second law of thermodynamics.
As a consequence, our framework not only unifies previous results but also offers a general basis for discussing thermodynamic consistency in GPTs.
Furthermore, our results show that, even in GPTs, the amount of extractable work is constrained by entropy changes associated with measurement processes.
In this sense, our work clarifies, within the most general framework of probabilistic physical theories, the correspondence between the operational formulation of the second law in terms of work extraction and its axiomatic formulation in terms of entropy nondecrease.

\section{Preliminaries}

We review the framework of generalized probabilistic theories (GPTs) \cite{Ludwig1964,Davies1970,Gudder1973,Barrett2007,Janotta2014,Plavala2023} and introduce notations used in this paper.
The framework of GPTs is based on minimal assumptions for the probabilistic predictability of measurement outcomes.
Thus, it offers a framework for describing the landscape of possible theories of the world including classical and quantum theories, as well as many other hypothetical theories beyond them \cite{Popescu1994,Janotta2011,Jordan1934,Barnum2020,Yoshida2020,Plavala2022,Plavala2023-2,Jiang2024}.
For the sake of brevity, we here introduce the formalism in a top-down manner; however, it should be noted that it can be reconstructed solely from physical requirements.

\subsection{Systems, states, and effects}

A \textit{GPT system} \(A\) is specified by a triple \(\p{V_A,C_A,\mathds1_A}\), where \(V_A\) is a real vector space, \(C_A\subset V_A\) is a positive cone, and \(\mathds1_A:V_A\to\mathds R\) is a unit effect.
A subset \(C_A\) of a real vector space \(V_A\) is called a \textit{positive cone} if it satisfies the following conditions:
\begin{equation}
    \begin{aligned}
        \mbox{(i)}\quad&\mathds{R}_{\geq0}C_A\subseteq C_A,\\
        \mbox{(ii)}\quad&C_A+C_A\subseteq C_A,\\
        \mbox{(iii)}\quad&C_A\cap(-C_A)=\B{0},\\
        \mbox{(iv)}\quad&C_A-C_A=V_A.
    \end{aligned}
\end{equation}
The \textit{unit effect} \(\mathds1_A:V_A\to\mathds R\) is a linear function satisfying \(\mathds1_A\rho>0\) for any \(\rho\in C_A\setminus\B{0}\).
We introduce the norm on \(V_A\) by
\begin{equation}
    \V\rho\coloneq\inf\B{\mathds1_A\p{\rho_++\rho_-}\mid\rho_\pm\in C_A,\:\rho_+-\rho_-=\rho}.
\end{equation}
Note that \(\V\rho=\mathds1_A\rho\) for all \(\rho\in C_A\).
We further require that \(V_A\) be complete and \(C_A\) be closed with respect to this norm.

The \textit{state space} of a GPT system \(A\) is given by
\begin{equation}
    \mathrm{St}\p A\coloneq\B{\rho\in C_A\mid\V\rho=1}.
\end{equation}
An element \(\rho\in C_A\setminus\B{0}\) corresponds to an unnormalized state, whose normalization is given by \(\hat{\rho}\coloneq \frac{\rho}{\V{\rho}}\in \mathrm{St}\p A\).

The \textit{effect space} of a GPT system \(A\) is defined by
\begin{equation}
    \mathrm{Eff}\p A\coloneq C_A^*\cap\p{\mathds1_A-C_A^*},
\end{equation}
where
\begin{equation}
    C_A^*\coloneq\B{e\in V_A^*\mid eC_A=\mathds R_{\ge0}}
\end{equation}
is the \textit{dual cone} of \(C_A\).
An effect assigns to each state the probability of an event.
In particular, the unit effect \(\mathds1_A\in\mathrm{Eff}\p A\) assigns unit probability to every state.

We introduce the partial order on \(V_A\) and \(V_A^*\) by \(\rho\le\sigma\vcentcolon\Leftrightarrow\sigma-\rho\in C_A\) and \(e\le f\vcentcolon\Leftrightarrow f-e\in C_A^*\), respectively.
For unnormalized states \(\rho,\sigma\in C_A\), \(\rho\le \sigma\) implies that the state \(\hat{\sigma}\) can be prepared by mixing the states \(\hat{\rho}\) and \(\hat{\sigma-\rho}\).
For effects \(e,f\in\mathrm{Eff}\p A\), \(e\le f\) implies that the probability of the occurrence of \(f\) is greater than or equal to the probability of the occurrence of \(e\) for any state.
We call a state \(\rho\in\mathrm{St}\p{A}\) to be \textit{pure} if
\begin{equation}
    0\le \sigma\le \rho \implies \exists \lambda\in[0,1]\quad \mbox{s.t.}\quad \sigma=\lambda\rho.
\end{equation}
Similarly, we call an effect \(e\in\mathrm{Eff}\p{A}\) to be \textit{indecomposable} if
\begin{equation}
    0\le f\le e \implies \exists \lambda\in[0,1]\quad \mbox{s.t.}\quad f=\lambda e.
\end{equation}

The most prominent example of a GPT system is a quantum system.
A quantum system specified by a Hilbert space \(\mathscr{H}\) is a GPT system specified by the triple \(\p{V_\mathscr{H}, C_\mathscr{H},\mathds1_\mathscr{H}}\coloneq\p{\mathcal T_\mathrm{sa} \p{\mathscr H},\mathcal T_+\p{\mathscr H},\mathds1_\mathscr{H}}\), where \(\mathcal T_\mathrm{sa}\p{\mathscr H}\) and \(\mathcal T_+\p{\mathscr H}\) denote the space of self-adjoint trace-class operators and positive trace-class operators on \(\mathscr{H}\), and \(\mathds1_{\mathscr H}\rho\coloneq\mathrm{tr}\,\rho\).
Then, \(V_\mathscr{H}^*\) and \(C_\mathscr{H}^*\) are isomorphic to the space of self-adjoint bounded operators \(\mathcal B_\mathrm{sa}\p{\mathscr H}\) and positive operators \(\mathcal B_+ \p{\mathscr H}\) on \(\mathscr H\), and \(\mathds1_{\mathscr H}\) corresponds to the identity operator on \(\mathscr H\) under this isomorphism.
Thus, the state space and effect space of a quantum system are given by
\begin{equation}
    \mathrm{St}\p{\mathscr H}=\B{\rho\in\mathcal T_+\p{\mathscr H}\mid\mathrm{tr}\,\rho=1}
\end{equation}
and
\begin{equation}
    \mathrm{Eff}\p{\mathscr H}\cong\B{e\in\mathcal B_+\p{\mathscr H}\mid e\le\mathds1_\mathscr{H}},
\end{equation}
respectively.

\subsection{Processes and composite systems}

Processes on GPT systems are described by positive maps.
For GPT systems \(A\) and \(B\), a linear map \(\mathcal F:V_A\to V_B\) satisfying \(\mathcal F\p{C_A}\subseteq C_B\) is called a \textit{positive map}, and is denoted by \(\mathcal F:A\to B\).
A positive map \(\mathcal F:A\to B\) satisfying \(\mathds1_B\mathcal F=\mathds1_A\) is called a \textit{process}.
This ensures that state normalization is preserved, which corresponds to the conservation of probability.
The composition \(\mathcal G\mathcal F:A\to C\) of processes \(\mathcal F:A\to B\) and \(\mathcal G:B\to C\) is also a process, representing the sequential application of \(\mathcal F\) followed by \(\mathcal G\).
The process given by the identity map on \(V_A\) is denoted by \(\mathrm{id}_A\).
A process that admits an inverse map that is also a process is called \textit{reversible}.

In the following, we assume that all GPT systems are finite-dimensional.
A GPT system \(AB\) is called a \textit{composite system} of two systems \(A\) and \(B\) if it satisfies
\begin{equation}
    \begin{aligned}
        V_{AB}&=V_A\otimes V_B,\\
        C_{AB}&\supseteq C_A\otimes C_B,\\
        C_{AB}^*&\supseteq C_A^*\otimes C_B^*,\\
        \mathds1_{AB}&=\mathds1_A\otimes\mathds1_B.
    \end{aligned}
\end{equation}
For composite systems \(A_1B_1\) and \(A_2B_2\), and processes \(\mathcal F:A_1\to A_2\) and \(\mathcal G:B_1\to B_2\), we require that the tensor product map \(\mathcal F\otimes\mathcal G:V_{A_1}\otimes V_{B_1}\to V_{A_2}\otimes V_{B_2}\) is a process, i.e., \(\mathcal F\otimes\mathcal G:A_1B_1\to A_2B_2\), representing the parallel composition of \(\mathcal F\) and \(\mathcal G\).
This requirement may restrict the set of allowed processes on the component systems, analogous to the requirement of complete positivity in quantum theory.
It is important to note that the definition of composite systems is not unique in general.
In particular, one can introduce the \textit{minimal tensor product} \(C_A\otimes_{\min}C_B\) and \textit{maximal tensor product} \(C_A\otimes_{\max}C_B\), defined by
\begin{equation}
    \begin{aligned}
        C_A\otimes_{\min}C_B&\coloneq\mathrm{conv}\p{C_A\otimes C_B},\\
        \p{C_A\otimes_{\max}C_B}^*&\coloneq \mathrm{conv}\p{C_A^*\otimes C_B^*},
    \end{aligned}
\end{equation}
as well as intermediate constructions lying between them.
For example, for two quantum systems with Hilbert spaces \(\mathscr H_1\) and \(\mathscr H_2\), one has
\begin{equation}
    \begin{aligned}
        \mathcal T_+\p{\mathscr H_1}\otimes_{\min}\mathcal T_+\p{\mathscr H_2}&\subsetneq\mathcal T_+\p{\mathscr H_1\otimes\mathscr H_2}\\
        &\subsetneq\mathcal T_+\p{\mathscr H_1}\otimes_{\max}\mathcal T_+\p{\mathscr H_2}.
    \end{aligned}
\end{equation}
Thus, the standard notion of a composite system in quantum theory does not coincide with either the minimal or the maximal tensor product.

For a state \(\rho^{AB}\in\mathrm{St}\p{AB}\), the reduced state on subsystem \(A\) is defined by \(\rho^A\coloneq\mathrm{tr}_B\rho^{AB}\coloneq\p{\mathrm{id}_A\otimes\mathds1_B}\rho^{AB}\).
The partial trace \(\mathrm{tr}_{B}:AB\to A\) is a process.

\subsection{Classical systems and direct sums}

A \textit{classical system} specified by a finite set \(X\) is defined as the GPT system \(\p{V_X,C_X,\mathds1_X}\), where
\begin{equation}
    \begin{aligned}
        V_X&\coloneq\mathds R^X=\B{\p{p_x}_{x\in X}\mid\forall x\in X,\:p_x\in\mathds R},\\
        C_X&\coloneq\mathds R_{\ge0}^X=\B{\p{p_x}_{x\in X}\mid\forall x\in X,\:p_x\in\mathds R_{\ge0}},\\
        \mathds1_X&\p{p_x}_{x\in X}\coloneq\sum_{x\in X}p_x.
    \end{aligned}
\end{equation}
The state space of the classical system is the simplex
\begin{equation}
    \mathrm{St}\p X=\B{\p{p_x}_{x\in X}\in\mathds R^X_{\ge0}\mid\sum_{x\in X}p_x=1},
\end{equation}
with pure states given by
\begin{equation}
    \delta_x\coloneq\p{\delta_{x,x'}}_{x'\in X},\quad x\in X.
\end{equation}
Any state can be written uniquely as a probabilistic mixture of these pure states: \(\p{p_x}_{x\in X}=\sum_{x\in X}p_x\delta_x\).
We denote by \(\epsilon_x\) the effect that returns the probability of being in the state \(\delta_x\), namely \(\epsilon_x\p{p_x}_{x\in X}\coloneq p_x\).
With this notation, the unit effect is expressed as \(\mathds1_X=\sum_{x\in X}\epsilon_x\).

We next introduce the direct sum of systems.
For a family of systems \(\p{A_x}_{x\in X}\) indexed by a classical system \(X\), we define their \textit{direct sum} \(\bigoplus_{x\in X}A_x\) as the system specified by
\begin{equation}
    \begin{aligned}
        V_{\bigoplus_{x\in X}A_x}&\coloneq\B{\p{\rho_x^{A_x}}_{x\in X}\mid\forall x\in X,\:\rho_x^{A_x}\in V_{A_x}},\\
        C_{\bigoplus_{x\in X}A_x}&\coloneq\B{\p{\rho_x^{A_x}}_{x\in X}\mid\forall x\in X,\:\rho_x^{A_x}\in C_{A_x}},\\
        \mathds1_{\bigoplus_{x\in X}A_x}&\p{\rho_x^{A_x}}_{x\in X}\coloneq\sum_{x\in X}\mathds1_{A_x}\rho_x.
    \end{aligned}
\end{equation}
The direct sum describes the statistical mixture of different systems.
We denote by
\begin{equation}
    \aab{g_x}_{x\in X}\p{\p{\rho_x^{A_x}}_{x\in X}}\coloneq\sum_{x\in X}\V{\rho_x^{A_x}}g_x\p{\hat\rho_x^{A_x}}
\end{equation}
the expectation value of a family of functions \(\p{g_x:\mathrm{St}\p{A_x}\to\mathds R}_{x\in X}\) evaluated at the state \(\p{\rho_x^{A_x}}_{x\in X}\in\mathrm{St}\p{\bigoplus_{x\in X}A_x}\).

The composition rule of two GPT systems is uniquely determined, \textit{i.e.}, the minimal and maximal tensor products coincide with each other, if and only if at least one of them is classical \cite{Aubrun2021,Aubrun2022}.
The composite system of a classical system \(X\) and a GPT system \(A\) is given by the direct sum \(XA\coloneq\bigoplus_{x\in X}A\).
A state of the system \(XA\) describes an ensemble of states of the system \(A\).
We write \(\rho^{XA}\in\mathrm{Ens}\p{\rho^A}\) when the state \(\rho^{XA}=\p{\rho_x^A}_{x\in X}\) of the system \(XA\) satisfies \(\sum_{x\in X}\rho_x^A=\rho^A\).
Furthermore, we write \(\rho^{XA}\in\mathrm{Ens}_\mathrm{pure}\p\rho^A\) if each \(\hat\rho_x^A\) is pure in addition.

\subsection{Measurements and measurement processes}

An \(X\)-outcome \textit{measurement} on a GPT system \(A\) is defined as a process \(\mathcal{E}:A\to X\).
A measurement \(\mathcal{E}:A\to X\) is in one-to-one correspondence with a family of effects \(\p{e_x\in\mathrm{Eff}\p A}_{x\in X}\) satisfying \(\sum_{x\in X}e_x=\mathds1_A\), where the correspondence is given by \(e_x=\epsilon_x\mathcal E\).
When performing the measurement \(\mathcal{E}\) on the state \(\rho\), \(e_x\rho\) gives the probability of obtaining the outcome \(x\in X\).
A measurement \(\p{e_x}_{x\in X}\) is said to be \textit{fine-grained} if each \(e_x\) is indecomposable.
We denote by \(\mathrm{Meas}\p A\) and \(\mathrm{Meas}_{\mathrm{fg}}\p A\) the class of all measurements and all fine-grained measurements on \(A\), respectively.

An \(X\)-outcome \textit{measurement process} on a GPT system \(A\) is defined as a process \(\mathcal{M}:A\to XA\).
For such a process \(\mathcal M\), \(\mathrm{tr}_A\mathcal M\) defines a measurement.
A measurement process \(\mathcal{M}:A\to XA\) is in one-to-one correspondence with a family of positive maps \(\p{\mathcal M_x:A\to A}_{x\in X}\) satisfying \(\sum_{x\in X}\mathds1_A\mathcal M_x=\mathds1_A\), where the correspondence is given by \(\mathcal M_x=\p{\mathrm{id}_A\otimes\epsilon_x}\mathcal M\).
When performing the measurement process \(\mathcal{M}\) on the state \(\rho\), \(\V{\mathcal{M}_x\rho}\) and \(\hat{\mathcal M_x\rho}\) describe the probability of obtaining the outcome \(x\in X\) and the post-measurement state conditioned on that outcome, respectively.
A measurement process \(\p{\mathcal M_x}_{x\in X}\) is said to be \textit{repeatable} if \(\mathcal M_x\mathcal M_{x'}=\delta_{x,x'}\mathcal M_x\) holds for all \(x,x'\in X\).

\section{Entropies in generalized probabilistic theories}

In this section, we introduce several entropies in GPTs and investigate their properties, particularly in relation to measurement processes and entropy nondecrease.
Proofs of the statements in this section are provided in Appendix \ref{app:entropies}.

\subsection{Definitions and properties of entropies}
\label{Definitions and properties of entropies}

Entropy plays a fundamental role in thermodynamics.
However, defining entropy in GPTs is nontrivial, and various generalizations of the von Neumann entropy in quantum theory have been proposed \cite{Short2010,Barnum2010,Kimura2010,Kimura2016}.

We first introduce various proposed definitions of entropies in GPTs.
An \textit{entropy} \(S\) is a family of functions \(S^A:\mathrm{St}\p A\to\mathds R\) defined for each system \(A\) in some class of systems.

We denote by \(H^X\p{\p{p_x}_{x\in X}}\) the Shannon entropy of a probability distribution \(\p{p_x}_{x\in X}\) and by \(I^{X:Y}\p{\p{p_{x,y}}_{x\in X,y\in Y}}\) the mutual information between \(X\) and \(Y\) for the probability distribution \(\p{p_{x,y}}_{x\in X,y\in Y}\).
\begin{dfn}
    \leavevmode
    \begin{itemize}
        \item The \textit{mixing entropy} \cite{Short2010,Barnum2010,Kimura2010,Kimura2016} \(S_\mathrm{mix}\) is defined by
        \begin{equation}
            S_\mathrm{mix}^A\p{\rho^A}\coloneq\inf_{\rho^{XA}\in\mathrm{Ens}_{\mathrm{pure}}\p{\rho^A}}H^X\p{\mathrm{tr}_A\rho^{XA}}.
        \end{equation}
        \item The \textit{measurement entropy} \cite{Short2010,Barnum2010,Kimura2010,Kimura2016} \(S_\mathrm{meas}\) is defined by
        \begin{equation}
            S_\mathrm{meas}^A\p{\rho^A}\coloneq\inf_{\p{\mathcal E:A\to X}\in\mathrm{Meas}_\mathrm{fg}\p A}H^X\p{\mathcal E\rho^A}.
        \end{equation}
        \item The \textit{accessible information entropy} \cite{Kimura2010,Kimura2016} \(S_\mathrm{acc}\) is defined by
        \begin{equation}
            S_\mathrm{acc}^A\p{\rho^A}\coloneq\sup_{\rho^{XA}\in\mathrm{Ens}\p{\rho}}I_\mathrm{acc}^{X:A}\p{\rho^{XA}},
        \end{equation}
        where
        \begin{equation}
            I_\mathrm{acc}^{X:A}\p{\rho^{XA}}\coloneq\sup_{\p{\mathcal E:A\to Y}\in\mathrm{Meas}\p A}I^{X:Y}\p{\p{\mathrm{id}_X\otimes\mathcal E}\rho^{XA}}.
        \end{equation}
        \item The \textit{induced entropy} \cite{Kimura2016} \(S'\) of an entropy \(S\) is defined by
        \begin{equation}
            {S'}^A\p{\rho^A}\coloneq\sup_{\rho^{XA}\in\mathrm{Ens}(\rho)}\p{I_\mathrm{acc}^{X:A}\p{\rho^{XA}}+\aab{S^A}_X\p{\rho^{XA}}}.
        \end{equation}
        \item The \textit{infinity entropy} \cite{Kimura2016} \(S^\infty\) of an entropy \(S\) is defined by
        \begin{equation}
            {S^\infty}^A\p{\rho^A}\coloneq\sup\B{S^A\p{\rho^A},{S'}^A\p{\rho^A},{S''}^A\p{\rho^A},\dots}.
        \end{equation}
        \item For an entropy \(S^A\) defined for a system \(A\), its \textit{extended entropy} is defined for composite systems \(XA\) with a classical system \(X\) by
        \begin{equation}
            \tilde S^{XA}\p{\rho^{XA}}\coloneq H\p{\mathrm{tr}_A\rho^{XA}}+\aab{S^A}_X\p{\rho^{XA}}.
        \end{equation}
    \end{itemize}
\end{dfn}

The entropies \(S_\mathrm{mix}\), \(S_\mathrm{meas}\), and \(S_\mathrm{acc}\) all reduce to the von Neumann entropy for quantum systems and to the Shannon entropy for classical systems \cite{Short2010,Barnum2010,Kimura2010,Kimura2016}.
Moreover, when \(S\) is the von Neumann (respectively, Shannon) entropy, \(S'\) and \(S^\infty\) also coincide with the von Neumann (respectively, Shannon) entropy \cite{Kimura2016}.
Furthermore, when \(S\) is the Shannon entropy, \(\tilde S\) also coincides with the Shannon entropy.

We next introduce several desirable properties of entropies, which are not necessarily satisfied by all entropies.
\begin{dfn}
    \leavevmode
    \begin{itemize}
        \item An entropy \(S\) is said to be \textit{invariant under reversible processes} if it satisfies \(S^A\p{\mathcal U\rho^A}=S^A\p{\rho^A}\)
        for any state \(\rho^A\) and any reversible process \(\mathcal U\).
        \item An entropy \(S\) is said to be \textit{concave} if it satisfies
        \begin{equation}
            S^A\p{\mathrm{tr}_X\rho^{XA}}\ge\aab{S^A}_X\p{\rho^{XA}}
        \end{equation}
        for any classical system \(X\) and any state \(\rho^{XA}\in\mathrm{St}\p{XA}\).
        \item An entropy \(S\) is said to satisfy the \textit{Holevo bound} if \(S'=S\) holds \cite{Kimura2016}.
        \item An entropy \(S\) is said to be \textit{subadditive} if it satisfies
        \begin{equation}
        \label{eq:subadditivity}
            S^{AB}\p{\rho^{AB}}\le S^A\p{\mathrm{tr}_B\rho^{AB}}+S^B\p{\mathrm{tr}_A\rho^{AB}}
        \end{equation}
        for any state \(\rho^{AB}\in\mathrm{St}\p{AB}\), with equality for every product state \(\rho^{AB}=\rho^A\otimes\rho^B\).
        \item An entorpy \(S\) is said to be \textit{consistent with the direct sum} if it coincides with the Shannon entropy for classical systems and satisfies
        \begin{equation}
            \begin{aligned}
                &S^{\bigoplus_{x\in X}A_x}\p{\p{\rho_x^{A_x}}_{x\in X}}\\
                &=H^X\p{\p{\V{\rho_x^{A_x}}}_{x\in X}}+\aab{S^{A_x}}_{x\in X}\p{\p{\rho_x^{A_x}}_{x\in X}}
            \end{aligned}
        \end{equation}
        for any ensemble of states \(\p{\rho_x^{A_x}}_{x\in X}\in\mathrm{St}\p{\bigoplus_{x\in X}A_x}\).
    \end{itemize}
\end{dfn}
The von Neumann entropy and the Shannon entropy satisfy all of these properties.

The entropies \(S_\mathrm{mix}\), \(S_\mathrm{meas}\), and \(S_\mathrm{acc}\) are invariant under reversible processes.
Moreover, the induced entropy, infinity entropy, and extended entropy of an entropy that is invariant under reversible processes are also invariant under reversible processes.

The measurement entropy and entropies satisfying the Holevo bound are concave \cite{Short2010,Barnum2010,Kimura2010,Kimura2016}.
An infinite entropy is conjectured to satisfy the Holevo bound \cite{Kimura2016}.

The measurement entropy always satisfies Eq.~\eqref{eq:subadditivity} because the tensor product of two fine-grained measurements is again fine-grained, as follows from Proposition 5.7 of Ref.~\cite{Debruyn2022}.
For the maximal tensor product, the measurement entropy is moreover additive on product states and is therefore subadditive in the stronger sense adopted here.
This includes the canonical composition of an arbitrary system with a classical system.
Additivity on product states also holds for the standard tensor product of quantum systems.
However, the additivity can fail for a general composite system.

In settings where all but one of the systems are classical, concavity and subadditivity are equivalent for an entropy \(S\) that is consistent with the direct sum.
We also find the following:
\begin{prop}
    \label{prop:consistency with the direct sum}
    \leavevmode
    \begin{enumerate}
        \item
        \(S_\mathrm{meas}\) is consistent with the direct sum.
        \item
        \(S_\mathrm{acc}\) is consistent with the direct sum.
        \item
        If an entropy \(S\) is consistent with the direct sum, then \(S'\) and \(S^\infty\) are also consistent with the direct sum.
    \end{enumerate}
\end{prop}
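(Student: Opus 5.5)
The plan is to exploit the block structure of the direct sum \(D\coloneq\bigoplus_{x\in X}A_x\). Its cone is the product of the cones \(C_{A_x}\), and its dual cone is the product of the dual cones \(C_{A_x}^*\). Hence an effect on \(D\) is a tuple \(\p{f_x}_{x\in X}\) with \(f_x\in\mathrm{Eff}\p{A_x}\), and ensembles and measurements on \(D\) decompose blockwise. For classical systems, each of \(S_\mathrm{meas}\), \(S_\mathrm{acc}\), \(S'\) and \(S^\infty\) is known to reduce to the Shannon entropy (as quoted in the text), so only the direct-sum identity needs proof. Throughout we write \(p_x\coloneq\V{\rho_x}\) and \(\rho=\p{\rho_x}_{x\in X}\).

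\textbf{Item 1.} First, an effect \(f=\p{f_x}_x\) on \(D\) is indecomposable iff exactly one component \(f_{x_0}\) is nonzero and that component is indecomposable in \(A_{x_0}\). Indeed, \(\p{0,\dots,f_{x_0},\dots,0}\le f\), so a decomposable-looking support on two blocks contradicts indecomposability. It follows that a fine-grained measurement on \(D\) is exactly a union over \(x\) of fine-grained measurements \(\p{e^x_y}_{y\in Y_x}\) on each \(A_x\), each padded with zeros in the other blocks. The outcome distribution is then \(q_{x,y}=e^x_y\rho_x=p_x\,e^x_y\hat\rho_x\). The chain rule gives
\[
H\p{\p{q_{x,y}}_{x,y}}=H\p{\p{p_x}_x}+\sum_x p_x H\p{\p{e^x_y\hat\rho_x}_y}.
\]
Since the blocks are chosen independently, taking the infimum yields the claim. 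Blocks with \(p_x=0\) contribute nothing.

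\textbf{Item 3.} For \(S'\), I first decompose the accessible information. Take an ensemble \(\rho^{ZD}\in\mathrm{Ens}\p\rho\); it is a family \(\p{\rho_{z,x}}_{z,x}\) with \(\sum_z\rho_{z,x}=\rho_x\). A measurement on \(D\) is a family of measurements \(\p{e^x_y}_{y\in Y}\) on each block, and it can always be refined by first revealing \(x\) (the measurement \(\p{e^x_y}\) in block \(x\) with outcome \(\p{x,y}\)), which can only increase mutual information. Conditioning on \(x\) via the chain rule \(I\p{Z:XY}=I\p{Z:X}+\sum_x p_x I\p{Z:Y\mid X=x}\), and optimizing each block independently, gives
\[
I_\mathrm{acc}^{Z:D}=I^{Z:X}+\sum_x p_x I_\mathrm{acc}^{Z:A_x}\p{\text{conditional ensemble in block }x}.
\]
Adding \(\aab{S}_Z\), and using the hypothesis on \(S\) to expand \(S^D\p{\hat\rho_z}=H\p{X\mid Z=z}+\sum_x p\p{x\mid z}S^{A_x}\p{\hat\rho_{z,x}}\), the objective becomes
\[
I\p{Z:X}+H\p{X\mid Z}+\sum_x p_x\Bigl(I_\mathrm{acc}\p{\cdot}+\aab{S^{A_x}}\p{\cdot}\Bigr)
=H\p{X}+\sum_x p_x\p{\cdots}.
\]
The supremum over ensembles of \(D\) then splits into independent suprema over ensembles of each \(\rho_x\). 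This is because an ensemble of \(D\) is arbitrary per block, and a single label \(Z\) can be taken as a disjoint union, so that any family of block ensembles is realized. This gives \(H\p{p}+\sum_x p_x {S'}^{A_x}\p{\hat\rho_x}\). By induction the same holds for \(S'',S''',\dots\). For \(S^\infty\), I take the supremum over \(n\) of \(H\p{p}+\sum_x p_x S^{(n)}\p{\hat\rho_x}\). This requires commuting \(\sup_n\) with the finite sum, which is valid if \(S^{(n)}\) is monotone in \(n\). Monotonicity \(S'\ge S\) holds by taking the trivial ensemble, and then follows inductively. So the supremum becomes a limit and commutes with the finite sum.

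\textbf{Item 2.} This follows by the same computation as item 3 with the \(\aab{S}\) term dropped: \(S_\mathrm{acc}\) is the sup of \(I_\mathrm{acc}\), which decomposes as \(I\p{Z:X}+\sum_x p_x I_\mathrm{acc}^{(x)}\). The difficulty is that \(I\p{Z:X}\) is not \(H\p{X}\) unless the ensemble resolves \(x\). However, refining \(Z\) to \(\p{Z,X}\) (splitting each \(\rho_z\) into its blocks) keeps the conditional block ensembles the same and raises \(I\p{Z:X}\) to \(H\p{X}\), and it does not decrease the block terms. So the sup equals \(H\p{p}+\sum_x p_x S_\mathrm{acc}^{A_x}\p{\hat\rho_x}\).

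\textbf{Main obstacle.} The step I expect to be hardest is the rigorous decomposition of \(I_\mathrm{acc}\) on the direct sum. The delicate part is justifying that revealing the block label is without loss of generality, i.e. that data processing of classical mutual information applies to the coarse-graining \(\p{x,y}\mapsto y\). Alongside this, one must check that the chain rule yields exactly the per-block conditional ensembles.
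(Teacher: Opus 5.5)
Your proposal is correct and follows essentially the same route as the paper. For \(S_\mathrm{meas}\) you use that indecomposable effects on \(\bigoplus_{x\in X}A_x\) are supported on a single block. For \(S_\mathrm{acc}\) and \(S'\) you decompose \(I_\mathrm{acc}^{Z:\overline{XA}}=I^{Z:X}+\langle I_\mathrm{acc}^{Z:A_x}\rangle_{x\in X}\) by the chain rule and then split the supremum blockwise.

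You also supply several steps the paper's proof leaves implicit. These are the data-processing argument that revealing the block label loses nothing, and the refinement \(Z\to(Z,X)\) that pushes \(I^{Z:X}\) up to \(H^X\) for \(S_\mathrm{acc}\). Most notably, the paper's proof never treats \(S^\infty\); your monotonicity argument \(S\le S'\le S''\le\cdots\) fills this in, since it lets \(\sup_n\) commute with the finite sum.
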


\subsection{Measurement processes satisfying entropy nondecrease}

In this subsection, we consider measurement processes satisfying entropy nondecrease.
As we will see in later sections, whether processes involving measurements decrease entropies becomes a central issue in the consistency between the second law of thermodynamics and physical processes in GPTs.
Here, we provide sufficient conditions under which a measurement process and the subsequent forgetting of its outcome do not decrease the entropies introduced above.

We first discuss conditions under which a measurement process does not decrease an entropy.
We obtain the following result.
\begin{prop}
    \label{prop:measurement entropy non-decresing measurement processes}
    For any repeatable measurement process \(\mathcal M:A\to XA\) such that \(\mathrm{tr}_A \mathcal M\) is fine-grained,
    \begin{equation}
        S_\mathrm{meas}^A\p{\rho^A}\le S_\mathrm{meas}^{XA}\p{\mathcal M\rho^A}
    \end{equation}
    holds for any state \(\rho^A\in\mathrm{St}\p A\).
\end{prop}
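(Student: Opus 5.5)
The plan is to compute the right-hand side with Proposition~\ref{prop:consistency with the direct sum} and then build one fine-grained measurement on \(A\) that matches it, so that the infimum defining \(S_\mathrm{meas}^A\p{\rho^A}\) gives the inequality.

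\textbf{Step 1: rewrite the right-hand side.} Write \(\mathcal M\rho^A=\p{\mathcal M_x\rho^A}_{x\in X}\), with \(p_x\coloneq\V{\mathcal M_x\rho^A}=e_x\rho^A\) and \(e_x\coloneq\mathds1_A\mathcal M_x\). For \(p_x>0\), set \(\sigma_x\coloneq\hat{\mathcal M_x\rho^A}\). Since \(XA=\bigoplus_{x\in X}A\), part 1 of Proposition~\ref{prop:consistency with the direct sum} gives
\begin{equation}
    S_\mathrm{meas}^{XA}\p{\mathcal M\rho^A}=H^X\p{\p{p_x}_{x\in X}}+\sum_{x:\,p_x>0}p_x S_\mathrm{meas}^A\p{\sigma_x}.
\end{equation}

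\textbf{Step 2: choose near-optimal measurements.} Fix \(\varepsilon>0\). For each \(x\) with \(p_x>0\), choose a fine-grained measurement \(\p{f^x_y}_{y\in Y_x}\) on \(A\) with \(H\p{\p{f^x_y\sigma_x}_y}\le S_\mathrm{meas}^A\p{\sigma_x}+\varepsilon\). For \(x\) with \(p_x=0\), take the trivial measurement \(\mathds1_A\).

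\textbf{Step 3: the candidate measurement.} Define effects on \(A\) by \(g_{x,y}\coloneq f^x_y\mathcal M_x\), indexed by \(\p{x,y}\).
\begin{itemize}
    \item \emph{Validity.} Each \(g_{x,y}\) lies in \(C_A^*\) because \(\mathcal M_x\) is positive. Moreover \(\sum_y g_{x,y}=\mathds1_A\mathcal M_x=e_x\) and \(\sum_x e_x=\mathds1_A\), so \(\p{g_{x,y}}_{x,y}\) is a measurement.
    \item \emph{Outcome distribution.} Its statistics on \(\rho^A\) are \(g_{x,y}\rho^A=p_x f^x_y\sigma_x\). The chain rule for Shannon entropy then gives
    \begin{equation}
        H\p{\p{g_{x,y}\rho^A}_{x,y}}=H^X\p{\p{p_x}_x}+\sum_x p_x H\p{\p{f^x_y\sigma_x}_y}.
    \end{equation}
    This is at most \(S_\mathrm{meas}^{XA}\p{\mathcal M\rho^A}+\varepsilon\).
\end{itemize}

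\textbf{Step 4: fine-grainedness, the main step.} It remains to show that \(\p{g_{x,y}}\) is fine-grained; then \(S_\mathrm{meas}^A\p{\rho^A}\le S_\mathrm{meas}^{XA}\p{\mathcal M\rho^A}+\varepsilon\), and letting \(\varepsilon\to0\) finishes the proof.

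Here the assumption that \(\mathrm{tr}_A\mathcal M=\p{e_x}_x\) is fine-grained is used. Since \(f^x_y\le\mathds1_A\) and \(\mathcal M_x\) is positive, \(0\le g_{x,y}\le\mathds1_A\mathcal M_x=e_x\). Indecomposability of \(e_x\) then gives \(g_{x,y}=\lambda_{x,y}e_x\) for some \(\lambda_{x,y}\in[0,1]\).
\begin{itemize}
    \item If \(\lambda_{x,y}>0\) and \(0\le h\le g_{x,y}\), then \(0\le h\le e_x\). Hence \(h=\mu e_x=\p{\mu/\lambda_{x,y}}g_{x,y}\), and \(\mu\le\lambda_{x,y}\) follows from \(h\le g_{x,y}\). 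So \(g_{x,y}\) is indecomposable.
    \item If \(\lambda_{x,y}=0\), then \(g_{x,y}=0\), which is trivially indecomposable. Such outcomes can be discarded without changing the entropy.
\end{itemize}

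I expect this step to be the crux. Repeatability seems not to be needed for the inequality itself, so I would check whether the argument above silently relies on it (for instance in how the infimum over \(\mathrm{Meas}_\mathrm{fg}\) treats zero effects or outcome sets). If it really is not needed, I would remark that the statement holds for any measurement process whose induced measurement is fine-grained.
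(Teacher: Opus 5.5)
Your proof is correct and is essentially the paper's argument: the paper builds the same composite measurement \((e^{x'}_y\mathcal M_{x'})_{x'\in X,\,y\in Y_{x'}}\) from fine-grained measurements that are optimal for the post-measurement states, and proves it fine-grained by the same bound \(e^{x'}_y\mathcal M_{x'}\le\mathds1_A\mathcal M_{x'}\). Your two deviations are sound improvements. The \(\varepsilon\)-argument removes the paper's tacit assumption that the infimum defining \(S_\mathrm{meas}\) is attained, and your direct chain-rule computation confirms that repeatability is not needed for the inequality; the paper uses it only to rewrite the conditional entropies through the single measurement \(\mathcal E\) and to identify the joint entropy with \(H(\mathcal E\rho^A)\).
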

In particular, this implies that any rank-\(1\) projective measurement process in quantum systems do not decrease the von Neumann entropy.
In fact, this remains true for arbitrary L\"uders measurement processes \(\rho\mapsto\p{\Pi_x\rho\Pi_x}_{x\in X}\) in quantum systems, where \(\p{\Pi_x}_{x\in X}\) is a family of mutually orthogonal projection operators, not necessarily of rank 1.

We next discuss conditions under which an entropy does not decrease when the measurement outcome is forgotten.
To this end, we introduce the following notion of \textit{information-preserving processes}.
\begin{dfn}
    An process \(\mathcal F:A\to B\) is said to \textit{preserve the information of an initial state} \(\rho^A\) if, for any \(\rho^{XA}\in\mathrm{Ens}\p{\rho^A}\), there exists a process \(\mathcal G:B\to A\) such that \(\p{\mathrm{id}_X\otimes\mathcal G\mathcal F}\rho^{XA}=\rho^{XA}\).
\end{dfn}
This definition means that the information of the classical system \(X\) encoded in system \(A\) can be perfectly recovered by \(\mathcal G\), even after applying \(\mathcal F\).
A reversible process on system \(A\) preserves the information of every initial state.
In addition, if a process \(\mathcal F:A\to B\) preserves the information of an initial state \(\rho^A\), then for any \(0<\sigma^A\le\rho^A\), \(\mathcal F\) also preserves the information of \(\hat\sigma^A\).

We say that an entropy \(S\) is \textit{nondecreasing under information-preserving processes} if it satisfies
\begin{equation}
    S^A\p{\rho^A}\le S^B\p{\mathcal F\rho^A}
\end{equation}
for any state \(\rho^A\) and any process \(\mathcal F\) that preserves the information of \(\rho^A\).
We then obtain the following result.
\begin{prop}
    \label{prop:entropy nondecreasing under information-preserving processes}
    If an entropy \(S\) is nondecreasing under information-preserving processes, then \(S'\) and \(S^\infty\) are also nondecreasing under information-preserving processes.
    In particular, \(S_\mathrm{acc}=0'\) is nondecreasing under information-preserving processes.
\end{prop}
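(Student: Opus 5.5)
The plan is to first prove the statement for \(S'\) directly from the definition, then iterate to reach \(S^\infty\), and finally specialize to the zero entropy.

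Fix a state \(\rho^A\) and a process \(\mathcal F:A\to B\) that preserves the information of \(\rho^A\). Take an arbitrary ensemble \(\rho^{XA}=\p{\rho_x^A}_{x\in X}\in\mathrm{Ens}\p{\rho^A}\). Its image \(\sigma^{XB}\coloneq\p{\mathrm{id}_X\otimes\mathcal F}\rho^{XA}=\p{\mathcal F\rho_x^A}_{x\in X}\) lies in \(\mathrm{Ens}\p{\mathcal F\rho^A}\), and the classical marginal is unchanged, \(\V{\mathcal F\rho_x^A}=\V{\rho_x^A}\). I will bound the two terms inside the supremum defining \(S'\) separately.

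\emph{Accessible-information term.} I claim \(I_\mathrm{acc}^{X:B}\p{\sigma^{XB}}=I_\mathrm{acc}^{X:A}\p{\rho^{XA}}\); in fact only \(\ge\) is needed. By information preservation there is a process \(\mathcal G:B\to A\) with \(\p{\mathrm{id}_X\otimes\mathcal G}\sigma^{XB}=\rho^{XA}\). For any measurement \(\mathcal E:A\to Y\), the composite \(\mathcal E\mathcal G:B\to Y\) is a measurement on \(B\), and it yields exactly the joint distribution \(\p{\mathrm{id}_X\otimes\mathcal E}\rho^{XA}\). Taking the supremum over \(\mathcal E\) gives \(I_\mathrm{acc}^{X:B}\p{\sigma^{XB}}\ge I_\mathrm{acc}^{X:A}\p{\rho^{XA}}\).

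\emph{Average-entropy term.} For each \(x\) with \(\rho_x^A\neq0\) we have \(0<\rho_x^A\le\rho^A\). By the remark after the definition, \(\mathcal F\) therefore preserves the information of \(\hat\rho_x^A\). Since \(\mathcal F\) is linear and norm preserving, \(\mathcal F\hat\rho_x^A=\hat{\mathcal F\rho_x^A}\). The hypothesis on \(S\) then gives \(S^A\p{\hat\rho_x^A}\le S^B\p{\hat{\mathcal F\rho_x^A}}\), and averaging with the common weights \(\V{\rho_x^A}\) yields \(\aab{S^A}_X\p{\rho^{XA}}\le\aab{S^B}_X\p{\sigma^{XB}}\).

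Adding the two bounds, the objective for \(\rho^{XA}\) is at most that for \(\sigma^{XB}\), which in turn is at most \({S'}^B\p{\mathcal F\rho^A}\). Taking the supremum over \(\rho^{XA}\) gives \({S'}^A\p{\rho^A}\le{S'}^B\p{\mathcal F\rho^A}\). So \(S'\) is nondecreasing under information-preserving processes.

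\emph{Infinity entropy.} By induction, every iterate \(S^{(n)}\) is nondecreasing under information-preserving processes. Hence
\begin{equation}
{S^{(n)}}^A\p{\rho^A}\le{S^{(n)}}^B\p{\mathcal F\rho^A}\le{S^\infty}^B\p{\mathcal F\rho^A}
\end{equation}
for every \(n\), and taking the supremum over \(n\) gives the claim for \(S^\infty\).

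\emph{The \(S_\mathrm{acc}\) case.} The zero entropy \(0\) is trivially nondecreasing under information-preserving processes. Its induced entropy \(0'\) has vanishing average term, so it coincides with \(S_\mathrm{acc}\) by definition, and the result follows from the first part.

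The only delicate step is the average-entropy term. It needs information preservation to pass from \(\rho^A\) to each component \(\hat\rho_x^A\), which is exactly what the stated hereditary property supplies; with that in hand, the rest is bookkeeping.
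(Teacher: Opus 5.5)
Your proof is correct and follows the paper's argument. The paper likewise compares, ensemble by ensemble, the accessible-information term via the recovery process \(\mathcal G\) and the averaged-entropy term, then takes the supremum; it proves equality of the accessible-information terms, though only your \(\ge\) direction is used. Your version also spells out two points the paper leaves implicit: the hereditary step to the components \(\hat\rho_x^A\), and the routine extension to \(S^\infty\) and \(S_\mathrm{acc}=0'\).
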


We next analyze conditions under which discarding the outcome of a measurement process becomes an information-preserving process.
We introduce the following notion of \textit{strongly repeatable measurement processes}.
\begin{dfn}
    A measurement process \(\mathcal M=\p{\mathcal M_x}_{x\in X}:A\to XA\) is said to be \textit{strongly repeatable} if, for any state \(\rho^A\in\mathrm{St}\p A\) and any \(x\in X\), \(\mathcal M_x\sigma^A=\sigma^A\) holds for every \(0\le\sigma^A\le\mathcal M_x\rho^A\).
\end{dfn}
In particular, L\"uders measurement processes in quantum systems and measure-and-prepare processes that prepare perfectly distinguishable pure states are strongly repeatable.
Here, a family of states \(\p{\rho_x^A\in\mathrm{St}\p A}_{x\in X}\) is said to be \textit{perfectly distinguishable} if there exists a measurement \(\p{e_x}_{x\in X}:A\to X\) such that \(e_x\rho_{x'}^A=\delta_{x,x'}\).
The corresponding measure-and-prepare process is given by \(\p{\rho_x^Ae_x}_{x\in X}:A\to XA\).
By contrast, a measure-and-prepare process that prepares perfectly distinguishable mixed states is not necessarily strongly repeatable.

We then obtain the following characterization.
\begin{prop}
    \label{prop:strogly repeatable implies information preserving}
    A measurement process \(\mathcal M:A\to XA\) is strongly repeatable if and only if, for any state \(\rho^A\in\mathrm{St}\p A\), \(\p{\mathrm{id}_Y\otimes\mathcal M\,\mathrm{tr}_X}\sigma^{YXA}=\sigma^{YXA}\) holds for every \(\sigma^{YXA}\in \mathrm{Ens}\p{\mathcal M\rho^A}\).
    In particular, if \(\mathcal M\) is strongly repeatable, then \(\mathrm{tr}_X:XA\to A\) preserves the information of the initial state \(\mathcal M\rho^A\) for any state \(\rho^A\in\mathrm{St}\p A\).
\end{prop}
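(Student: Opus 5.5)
The plan is to unpack both sides of the equivalence in terms of the components \(\mathcal M_x\) and of the direct-sum structure of \(YXA=\bigoplus_{y\in Y}\bigoplus_{x\in X}A\). A state \(\sigma^{YXA}\in\mathrm{Ens}\p{\mathcal M\rho^A}\) is a family \(\p{\sigma_{y,x}^A}_{y\in Y,x\in X}\) with \(\sigma_{y,x}^A\in C_A\) and \(\sum_{y}\sigma_{y,x}^A=\mathcal M_x\rho^A\) for every \(x\). In particular \(0\le\sigma_{y,x}^A\le\mathcal M_x\rho^A\). On the other hand,
\begin{equation}
    \p{\mathrm{id}_Y\otimes\mathcal M\,\mathrm{tr}_X}\sigma^{YXA}=\p{\mathcal M_{x}\textstyle\sum_{x'}\sigma_{y,x'}^A}_{y\in Y,x\in X}.
\end{equation}
So the right-hand condition reads
\begin{equation}
    \mathcal M_x\sum_{x'\in X}\sigma_{y,x'}^A=\sigma_{y,x}^A\quad\text{for all }y,x.
\end{equation}

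For the direction ``strongly repeatable \(\Rightarrow\) condition'', fix \(y\) and \(x\). By strong repeatability applied to each \(\sigma_{y,x'}^A\le\mathcal M_{x'}\rho^A\), we get \(\mathcal M_{x'}\sigma_{y,x'}^A=\sigma_{y,x'}^A\). Combined with trace preservation \(\sum_{x''}\mathds1_A\mathcal M_{x''}=\mathds1_A\) and positivity, this forces \(\mathcal M_{x''}\sigma_{y,x'}^A=0\) for \(x''\neq x'\). Indeed, \(\sum_{x''}\V{\mathcal M_{x''}\sigma_{y,x'}^A}=\V{\sigma_{y,x'}^A}\), while the \(x''=x'\) term already equals \(\V{\sigma_{y,x'}^A}\), so the remaining terms are positive elements of norm zero and hence vanish. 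Then \(\mathcal M_x\sum_{x'}\sigma_{y,x'}^A=\sum_{x'}\delta_{x,x'}\sigma_{y,x'}^A=\sigma_{y,x}^A\), as required. This vanishing-of-off-diagonal-terms step is the key point. It is easy but must be stated, since strong repeatability only speaks about \(\mathcal M_x\) acting on states below \(\mathcal M_x\rho^A\).

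For the converse, fix \(\rho^A\), \(x_0\), and \(0\le\sigma^A\le\mathcal M_{x_0}\rho^A\). Take \(Y=\B{0,1}\) and define \(\sigma_{0,x_0}^A=\sigma^A\) and \(\sigma_{1,x_0}^A=\mathcal M_{x_0}\rho^A-\sigma^A\). For \(x\neq x_0\), set \(\sigma_{0,x}^A=0\) and \(\sigma_{1,x}^A=\mathcal M_x\rho^A\). All entries lie in \(C_A\), the total normalization is \(1\), and the ensemble lies in \(\mathrm{Ens}\p{\mathcal M\rho^A}\). The condition at \(y=0,x=x_0\) gives \(\mathcal M_{x_0}\sigma^A=\sigma^A\), which is strong repeatability. (The degenerate case \(\sigma^A=0\) is trivial.)

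Finally, the ``in particular'' statement follows by taking \(\mathcal G=\mathcal M\) in the definition of information preservation. Any \(\sigma^{YXA}\in\mathrm{Ens}\p{\mathcal M\rho^A}\) satisfies \(\p{\mathrm{id}_Y\otimes\mathcal G\,\mathrm{tr}_X}\sigma^{YXA}=\sigma^{YXA}\), so \(\mathrm{tr}_X\) preserves the information of \(\mathcal M\rho^A\). I expect no real obstacle. The only care needed is the bookkeeping of the direct-sum identification of \(YXA\) and the norm argument used in the forward direction.
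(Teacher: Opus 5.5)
Your proof is correct and follows the paper's argument: the converse uses the same two-outcome ensemble with \(Y=\{0,1\}\), and the ``in particular'' part follows by choosing \(\mathcal G=\mathcal M\). In the forward direction you also make explicit that \(\mathcal M_{x''}\sigma_{y,x'}^A=0\) for \(x''\neq x'\), via normalization and strict positivity of \(\mathds1_A\), a step the paper's one-line computation uses without comment.
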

Proposition \ref{prop:strogly repeatable implies information preserving} shows that, for strongly repeatable measurement processes, the information of the measurement outcome can be recovered by performing the same measurement again.
Combining Proposition \ref{prop:entropy nondecreasing under information-preserving processes} and Proposition \ref{prop:strogly repeatable implies information preserving}, we conclude that forgetting the outcome of a strongly repeatable measurement does not decrease \(S_\mathrm{acc}\).

\section{Information thermodynamics in generalized probabilistic theories}

We consider a generalization of information thermodynamics \cite{Sagawa2008,Sagawa2009,Sagawa2011,Jacobs2009,Abdelkhalek2018,Minagawa2025} to GPTs.
In particular, the framework developed in this section generalizes the setting and results of Ref.\ \cite{Minagawa2025} from quantum theory to arbitrary GPTs.

Let the internal energy of a GPT system \(A\) be given by an affine function \(E^A:\mathrm{St}\p A\to\mathds R\).
For an entropy \(S^A\) and inverse temperature \(\beta\), we define the \textit{nonequilibrium free energy} of system \(A\) by \(F_\beta^A\p{\rho^A}:=E^A\p\rho-\beta^{-1}S^A\p{\rho^A}\).
We also define \(\gamma_\beta^A\) as the state that minimizes the nonequilibrium free energy at inverse temperature \(\beta\).
When \(A\) is a classical (respectively, quantum) system, \(\gamma_\beta^A\) coincides with the canonical distribution (respectively, Gibbs state).

We consider a composite system consisting of a target system \(A\), a memory device \(M\), and heat baths \(B_1\), \(B_2\), \(B_3\).
We assume that the internal energy of the total system is given by
\begin{equation}
    E^{AMB_1B_2B_3}=E^A+E^M+E^{B_1}+E^{B_2}+E^{B_3}.
\end{equation}
We consider the process illustrated in Fig.~\ref{fig:information thermodynamics}.

\begin{figure}
    \centering
    \begin{tikzpicture}
        \draw[line width=1pt](-0.25,0)--(5.25,0);
        \draw(-0.25,0)node[anchor=south west]{\(\scriptstyle\!M\)};
        \draw[line width=1pt](-0.25,2.4)--(5.25,2.4);
        \draw(-0.25,2.4)node[anchor=south west]{\(\scriptstyle\!A\)};
        \draw[line width=0.8pt,fill=black!10](0.5,-0.5)rectangle(1.5,2.9);
        \draw(1,1.2)node{\(\mathcal M\)};
        \draw[line width=1pt](1.5,1.2)node[anchor=south west]{\(\scriptstyle\!k\in K\)}--(3.5,1.2);
        \draw(1.5,0)node[anchor=south west]{\(\scriptstyle\!M\)};
        \draw(1.5,2.4)node[anchor=south west]{\(\scriptstyle\!A\)};
        \draw[line width=0.8pt,fill=black!10](2,1.9)rectangle(3,2.9);
        \draw(2.5,2.4)node{\(\mathcal F_k\)};
        \draw[line width=1pt](2.5,1.2)--(2.5,1.9);
        \fill(2.5,1.2)circle[radius=3pt];
        \draw(3,2.4)node[anchor=south west]{\(\scriptstyle\!A\)};
        \draw[line width=0.8pt,fill=black!10](3.5,-0.5)rectangle(4.5,1.7);
        \draw(4,0.6)node{\(\mathcal V\)};
        \draw(4.5,0)node[anchor=south west]{\(\scriptstyle\!M\)};
    \end{tikzpicture}
    \caption{Process involving measurement, feedback, and information erasure considered for the derivation of the generalized Sagawa–Ueda inequalities. The heat baths are omitted from the figure.}
    \label{fig:information thermodynamics}
\end{figure}

\begin{enumerate}
    \item The initial state is given by
    \begin{equation}
        \rho_{\p0}^{AMB_1B_2B_3}\coloneq\rho_{\p0}^A\otimes\rho_{\p0}^M\otimes\gamma_\beta^{B_1}\otimes\gamma_\beta^{B_2}\otimes\gamma_\beta^{B_3}.
    \end{equation}
    \item A measurement process \(\mathcal M:AMB_1\to KAMB_1\) is performed, where \(K\) is a classical system representing the measurement outcome.
    We additionally require that \(\mathcal M\otimes\mathrm{id}_{B_2B_3}\) is a valid process.
    (The same convention will be adopted below.)
    We denote the resulting state by \(\rho_{\p1}^{KAMB_1B_2B_3}\).
    \item A feedback process \(\mathcal F_k:AB_2\to AB_2\) depending on the measurement outcome \(k\in K\) is performed.
    More precisely, we apply the process \(\mathcal F=\sum_{k\in K}\p{\mathcal F_k\otimes\delta_k\epsilon_k}\) on system \(KAB_2\).
    We denote the resulting state by \(\rho_{\p2}^{KAMB_1B_2B_3}\).
    \item Finally, a process \(\mathcal V:KMB_3\to MB_3\) is performed to erase the measurement outcome.
    We denote the resulting state by \(\rho_{\p3}^{AMB_1B_2B_3}\).
    After this process, the state of the memory device returns to its initial state, namely, \(\rho_{\p3}^M=\rho_{\p0}^M\), where we use the shorthand notation \(\rho_{\p3}^M\coloneq \mathrm{tr}_{AB_1B_2B_3}\rho_{\p3}^{AMB_1B_2B_3}\), and similarly for other reduced states.
\end{enumerate}

We define the extracted work \(W_\mathrm{ext}\), the work cost of measurement \(W_\mathrm{meas}\), the work cost of erasure \(W_\mathrm{eras}\) and their total contribution \(W\) by
\begin{equation}
    \begin{aligned}
        W_\mathrm{ext}&\coloneq-E^{AB_2}\p{\rho_{\p2}^{AB_2}}+E^{AB_2}\p{\rho^A_{\p0}\otimes \gamma^{B_2}_\beta},\\
        W_\mathrm{meas}&\coloneq E^{MB_1}\p{\rho_{\p1}^{MB_1}}-E^{MB_1}\p{\rho^M_{\p0}\otimes \gamma^{B_1}_{\beta}},\\
        W_\mathrm{eras}&\coloneq E^{MB_3}\p{\rho_{\p3}^{MB_3}}-E^{MB_3}\p{\rho_{\p2}^M\otimes\gamma^{B_3}_\beta},\\
        W&\coloneq E^{AMB_1B_2B_3}\p{\rho_{\p3}^{AMB_1B_2B_3}}\\
        &\qquad-E^{AMB_1B_2B_3}\p{\rho_{\p0}^{AMB_1B_2B_3}}\\
        &=W_\mathrm{ext}-W_\mathrm{meas}-W_\mathrm{eras}.
    \end{aligned}
\end{equation}
Furthermore, we define the Shannon entropy of the measurement outcome by
\begin{equation}
    H\coloneq H^K\p{\rho_{\p1}^K},
\end{equation}
and, for an entropy \(S\), the Groenewold–Ozawa information gain \cite{Groenewold1971,Ozawa1986} by
\begin{equation}
    I\coloneq S^A\p{\rho^A_{\p0}}-\aab{S^A}_K\p{\rho_{\p1}^{KA}}.
\end{equation}
We also define the changes in nonequilibrium free energy by
\begin{equation}
    \begin{aligned}
        \Delta F_\beta^A&\coloneq F_\beta^A\p{\rho_{\p2}^A}-F_\beta^A\p{\rho_{\p0}^A},\\
        \aab{\Delta F_\beta^M}&\coloneq\aab{F_\beta^M}_K\p{\rho_{\p1}^{KM}}-F_\beta^M\p{\rho_{\p0}^M}.
    \end{aligned}
\end{equation}
Finally, we define the entropy production terms by
\begin{equation}
    \begin{aligned}
        \Delta S_\mathcal{M}&\coloneq\tilde S^{KAMB_1}\p{\rho_{\p1}^{KAMB_1}}\\
        &\qquad-S^{AMB_1}\p{\rho_{\p0}^A\otimes\rho_{\p0}^M\otimes\gamma^{B_1}_{\beta}},\\
        \Delta S_\mathcal{F}&\coloneq\tilde S^{KAB_2}\p{\rho_{\p2}^{KAB_2}}-\tilde S^{KAB_2}\p{\rho_{\p1}^{KA}\otimes\gamma^{B_2}_{\beta}},\\
        \Delta S_\mathcal{V}&\coloneq S^{MB_3}\p{\rho_{\p3}^{MB_3}}-\tilde S^{KMB_3}\p{\rho_{\p2}^{KM}\otimes\gamma^{B_3}_{\beta}}.
    \end{aligned}
\end{equation}

We obtain the following \textit{generalized Sagawa--Ueda inequalities} in GPTs
\begin{thm}
    \label{thm:Sagawa--Ueda}
    If the entropy \(S\) is subadditive, then
    \begin{equation}
        \begin{aligned}
            W_\mathrm{ext}&\le-\Delta F_\beta^A+\beta^{-1}\p{I-\Delta S_\mathcal{F}},\\
            W_\mathrm{meas}&\ge\aab{\Delta F_\beta^M}-\beta^{-1}\p{H-I-\Delta S_\mathcal{M}},\\
            W_\mathrm{eras}&\ge-\aab{\Delta F_\beta^M}+\beta^{-1}\p{H+\Delta S_\mathcal{V}}.
        \end{aligned}
    \end{equation}
    Consequently, \(W\le-\Delta F_\beta^A-\beta^{-1}\p{\Delta S_\mathcal{M}+\Delta S_\mathcal{F}+\Delta S_\mathcal{V}}\).
\end{thm}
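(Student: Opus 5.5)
The plan is to derive all three inequalities from two ingredients. The first is subadditivity of \(S\), including its equality on product states. The second is the variational characterization of \(\gamma_\beta^B\): for every state \(\sigma^B\) of a bath, \(S^B\p{\sigma^B}-S^B\p{\gamma_\beta^B}\le\beta\p{E^B\p{\sigma^B}-E^B\p{\gamma_\beta^B}}\). Since \(E^B\) is affine, this second bound also holds after averaging over a classical label, giving \(\aab{S^B}_K\p{\rho^{KB}}-S^B\p{\gamma_\beta^B}\le\beta\p{E^B\p{\rho^B}-E^B\p{\gamma_\beta^B}}\).

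For every step I would expand the extended entropy \(\tilde S\) of a state on \(K\otimes(\cdots)\) as \(H\) plus the \(K\)-average of \(S\) of the conditional states. Within each conditional state I would then apply subadditivity to split off the bath. For the reference states that appear in \(\Delta S_\mathcal M\), \(\Delta S_\mathcal F\) and \(\Delta S_\mathcal V\), the equality clause for product states gives, for example, \(\tilde S^{KAB_2}\p{\rho_{\p1}^{KA}\otimes\gamma_\beta^{B_2}}=H+\aab{S^A}_K\p{\rho_{\p1}^{KA}}+S^{B_2}\p{\gamma_\beta^{B_2}}\). Throughout I use that \(\rho_{\p2}^{KM}=\rho_{\p1}^{KM}\) and \(\rho_{\p2}^K=\rho_{\p1}^K\), because \(\mathcal F\) acts only on \(KAB_2\) and is conditioned on \(K\).

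\emph{Measurement.} For each \(k\), subadditivity on \(AMB_1\) gives \(\tilde S^{KAMB_1}\p{\rho_{\p1}}\le H+\aab{S^A}_K+\aab{S^M}_K+\aab{S^{B_1}}_K\). Bounding \(\aab{S^{B_1}}_K\) by the Gibbs inequality and using \(\aab{S^A}_K=S^A\p{\rho_{\p0}^A}-I\), I get
\[
\Delta S_\mathcal M\le H-I+\aab{S^M}_K\p{\rho_{\p1}^{KM}}-S^M\p{\rho_{\p0}^M}+\beta\,\Delta E^{B_1}.
\]
Substituting \(\Delta E^{B_1}=W_\mathrm{meas}-\p{\aab{E^M}_K-E^M\p{\rho_{\p0}^M}}\) and rearranging yields the stated lower bound on \(W_\mathrm{meas}\).

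\emph{Erasure.} By subadditivity and \(\rho_{\p3}^M=\rho_{\p0}^M\),
\[
S^{MB_3}\p{\rho_{\p3}^{MB_3}}\le S^M\p{\rho_{\p0}^M}+S^{B_3}\p{\rho_{\p3}^{B_3}}\le S^M\p{\rho_{\p0}^M}+S^{B_3}\p{\gamma_\beta^{B_3}}+\beta\,\Delta E^{B_3}.
\]
Subtracting the reference value \(H+\aab{S^M}_K+S^{B_3}\p{\gamma_\beta^{B_3}}\) gives a bound on \(\Delta S_\mathcal V\). Combining it with \(W_\mathrm{eras}=E^M\p{\rho_{\p0}^M}-\aab{E^M}_K+\Delta E^{B_3}\) yields the third inequality.

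\emph{Feedback.} I need
\[
\tilde S^{KAB_2}\p{\rho_{\p2}}\le S^A\p{\rho_{\p2}^A}+H+\aab{S^{B_2}}_K\p{\rho_{\p2}^{KB_2}}.
\]
This is subadditivity for the bipartition \(A\,|\,KB_2\), where \(S\) on systems containing the classical factor \(K\) is identified with \(\tilde S\). Note that splitting conditionally on \(k\) is not enough, since that would give \(\aab{S^A}_K\p{\rho_{\p2}^{KA}}\) rather than \(S^A\p{\rho_{\p2}^A}\). I would justify the identification by working throughout on the composite \(K\otimes(\cdot)\), whose composition rule is canonical, and by requiring the subadditivity hypothesis to hold for \(\tilde S\) on such composites. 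Given this, the Gibbs bound on \(\aab{S^{B_2}}_K\), the equality \(-\aab{S^A}_K\p{\rho_{\p1}^{KA}}=I-S^A\p{\rho_{\p0}^A}\), and \(W_\mathrm{ext}=-\Delta E^A-\Delta E^{B_2}\) give \(\beta W_\mathrm{ext}\le-\beta\Delta F_\beta^A+I-\Delta S_\mathcal F\).

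Summing the three inequalities cancels \(H\), \(I\) and \(\aab{\Delta F_\beta^M}\), which produces the bound on \(W\). The main obstacle is the feedback step: it is the only one that needs subadditivity across a bipartition that separates \(A\) from the classical register, so it must be set up so that it follows from the hypothesis on \(S\) (equivalently on \(\tilde S\)). The other two steps only use subadditivity conditionally on \(k\).
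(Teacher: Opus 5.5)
Your argument is correct. It has the same architecture as the paper's proof: express each work term through free energies, use minimality of \(F_\beta\) at \(\gamma_\beta\) for the bath, split entropies by subadditivity, and evaluate the reference states with the product-state equality. You split at a different point, however, and this buys something.

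The paper never conditions on \(k\) before splitting. In the measurement step it uses \(\tilde S^{KAMB_1}(\rho_{(1)})\le\tilde S^{KAM}(\rho_{(1)})+S^{B_1}(\rho_{(1)}^{B_1})\). In the feedback step it uses \(\tilde S^{KAB_2}(\rho_{(2)})\le H+S^A(\rho_{(2)}^A)+S^{B_2}(\rho_{(2)}^{B_2})\). Both times it treats \(\tilde S\) as the entropy of a composite with the classical register, together with the unaveraged bound \(F_\beta^B(\rho^B)\ge F_\beta^B(\gamma_\beta^B)\). Unpacked through the definition of \(\tilde S\), these steps are conditional subadditivity plus \(\langle S^{X}\rangle_K\le S^{X}\) for \(X=B_1,A,B_2\), i.e.\ concavity on three systems.

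Your outcome-wise splitting, combined with the averaged Gibbs bound (valid since \(E^B\) is affine), makes the measurement and erasure steps rest on plain subadditivity of \(S\) alone. In the feedback step, your bipartition \(A\,|\,KB_2\) needs only \(\langle S^A\rangle_K(\rho_{(2)}^{KA})\le S^A(\rho_{(2)}^A)\) on top of conditional subadditivity. You could state that requirement directly as concavity of \(S^A\), rather than as subadditivity of \(\tilde S\) on composites with \(K\). It is the same implicit ingredient the paper relies on (cf.\ its remark that concavity and subadditivity coincide when all but one system is classical), and your version shows it is needed only once.
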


The proof is provided in Appendix \ref{app:Sagawa--Ueda}.

\begin{rmk}
    \label{rmk:Sagawa--Ueda}
    The measurement entropy satisfies the assumptions of Theorem \ref{thm:Sagawa--Ueda} for the maximal tensor product and the standard tensor product of quantum systems.
    Moreover, when all but one of the systems under consideration are classical, the extended entropy associated with any concave entropy of the one nonclassical subsystem satisfies the assumptions of Theorem \ref{thm:Sagawa--Ueda}.
\end{rmk}

Now consider the case where \(\mathcal F\) and \(\mathcal V\) are reversible and \(\mathcal M=\p{\mathcal M_0\otimes\mathrm{id}_{AB_1}}\mathcal U\), where \(\mathcal U:AMB_1\to AMB_1\) is a reversible process and \(\mathcal M_0:M\to KM\) is a measurement process such that \(\mathcal M_0\otimes\mathrm{id}_{AB_1}\) does not decrease \(S\).
If the entropy \(S\) is additionally invariant under reversible processes, then \(W\le-\Delta F_\beta^A\) holds.
Consequently, even in GPT systems, no work can be extracted from an isothermal cyclic process composed of such processes, in agreement with the second law of thermodynamics.

For example, \(S_{\mathrm{meas}}\) is invariant under reversible processes and subadditive for the maximal tensor product, the canonical composition with a classical system, and the standard quantum tensor product.
Hence, whenever \(\mathcal M_0\otimes\mathrm{id}_{AB_1}\) does not decrease \(S_\mathrm{meas}\), the above argument applies.
Proposition \ref{prop:measurement entropy non-decresing measurement processes} ensures that \(\mathcal M_0\) does not decrease \(S_\mathrm{meas}\) if it is a repeatable fine-grained measurement process.
However, it remains unclear whether this entropy-nondecreasing property is stable under tensoring with \(\mathrm{id}_{AB_1}\), except in quantum theory or when \(AB_1\) is classical.

\section{Semipermeable-membrane model}

Several studies on semipermeable-membrane (SPM) models have investigated thermodynamics in GPTs \cite{Hanggi2013,Krumm2015,Krumm2017,Minagawa2022}.
In this section, we construct a general framework for SPM models and derive conditions under which measurement processes can be implemented by semipermeable membranes (SPMs) without violating the second law of thermodynamics.
Proofs of the statements in this section are provided in Appendix \ref{app:SPM model}.

\subsection{General Analysis}

We consider an ideal gas consisting of \(N\) classical free particles with an internal GPT degree of freedom \(A\), distributed among several containers.
The state of each particle can be described by a state \(\p{\rho_z^A}_{z\in Z}\) of the composite system of a classical system \(Z\), representing the container in which the particle is located, and the internal system \(A\).
Here, \(\V{\rho_z^A}\) represents the probability that the particle is in container \(z\in Z\), while \(\hat\rho_z^A\) denotes the corresponding internal state conditioned on the particle being in that container.

We consider SPMs whose transmission or reflection behavior depends on the internal state of the gas particles.
When a particle collides with such an SPM, a repeatable measurement process \(\mathcal M:A\to KA\) is performed.
If the measurement outcome is \(k\in K\), then the particle passes through the SPM with probability \(p_k\) and is reflected with probability \(1-p_k\).
By moving such SPMs within the containers, one can manipulate the positions of particles depending on their internal states.

Since the measurement process \(\mathcal M\) is repeatable, this operation is equivalent to first performing \(\mathcal M\) simultaneously on all particles in the container, thereby labeling each particle by its measurement outcome \(k \in K\), and then independently manipulating the particles corresponding to each label using SPMs that classically distinguish the labels.
The same interpretation applies when several SPMs implementing the same measurement process \(\mathcal M\) are used simultaneously, even if their transmission/reflection probabilities differ.
Furthermore, one may simultaneously insert, move, or remove walls that block all particles without measurements.

Let \(\mathcal F_k:Z\to Z'\) denote the process experienced by particles with measurement outcome \(k\in K\).
Then, the above operations implemented by SPMs can generally be represented as in Fig.~\ref{fig:SPM operation}.

\begin{figure}
    \centering
    \begin{tikzpicture}
        \draw[line width=1pt](-0.25,0)node[anchor=south west]{\(\scriptstyle\!Z\)}--(4.25,0);
        \draw[line width=1pt](-0.25,2.4)node[anchor=south west]{\(\scriptstyle\!A\)}--(4.25,2.4);
        \draw[line width=0.8pt,fill=black!10](0.5,0.7)rectangle(1.5,2.9);
        \draw(1,1.8)node{\(\mathcal M\)};
        \draw[line width=1pt](1.5,1.2)node[anchor=south west]{\(\scriptstyle\!k\in K\)}--(3.5,1.2);
        \draw(1.5,2.4)node[anchor=south west]{\(\scriptstyle\!A\)};
        \draw[line width=0.8pt,fill=black!10](2,-0.5)rectangle(3,0.5);
        \draw(2.5,0)node{\(\mathcal F_k\)};
        \draw[line width=1pt](2.5,0.5)--(2.5,1.2);
        \fill(2.5,1.2)circle[radius=3pt];
        \draw(3,0)node[anchor=south west]{\(\scriptstyle\!Z'\)};
        \draw[line width=1pt](3.5,0.95)--(3.5,1.45);
    \end{tikzpicture}
    \caption{General form of an operation on gas particles with internal GPT system \(A\), implemented using semipermeable membranes (SPMs) performing the measurement process \(\mathcal M\).}
    \label{fig:SPM operation}
\end{figure}

For example, the following operations are all included within this framework, assuming that the initial state is in equilibrium:
\begin{itemize}
    \item inserting an SPM into a container and waiting for equilibration,
    \item infinitesimally moving an SPM separating two containers,
    \item removing an SPM separating two containers and waiting for equilibration.
\end{itemize}
Furthermore, since the measurement process \(\mathcal M\) is repeatable, the identity
\begin{equation}
    \vcenter{\hbox{
    \begin{tikzpicture}[scale=0.5]
        \footnotesize
        \draw[line width=1pt](-0.25,0)node[anchor=south west]{\(\scriptstyle\!Z\)}--(7.75,0);
        \draw[line width=1pt](-0.25,2.4)node[anchor=south west]{\(\scriptstyle\!A\)}--(7.75,2.4);
        \draw[line width=0.8pt,fill=black!10](0.5,0.7)rectangle(1.5,2.9);
        \draw(1,1.8)node{\(\mathcal M\)};
        \draw[line width=1pt](1.5,1.2)node[anchor=south west]{\(\scriptstyle\!k\in K\)}--(3.5,1.2);
        \draw(1.5,2.4)node[anchor=south west]{\(\scriptstyle\!A\)};
        \draw[line width=0.8pt,fill=black!10](2,-0.5)rectangle(3,0.5);
        \draw(2.5,0)node{\(\mathcal F_k\)};
        \draw[line width=1pt](2.5,0.5)--(2.5,1.2);
        \fill(2.5,1.2)circle[radius=3pt];
        \draw(3,0)node[anchor=south west]{\(\scriptstyle\!Z'\)};
        \draw[line width=1pt](3.5,0.95)--(3.5,1.45);
        \draw[line width=0.8pt,fill=black!10](4,0.7)rectangle(5,2.9);
        \draw(4.5,1.8)node{\(\mathcal M\)};
        \draw[line width=1pt](5,1.2)node[anchor=south west]{\(\scriptstyle\!k'\in K\)}--(7,1.2);
        \draw(5,2.4)node[anchor=south west]{\(\scriptstyle\!A\)};
        \draw[line width=0.8pt,fill=black!10](5.5,-0.5)rectangle(6.5,0.5);
        \draw(6,0)node{\(\mathcal G_{k'}\)};
        \draw[line width=1pt](6,0.5)--(6,1.2);
        \fill(6,1.2)circle[radius=3pt];
        \draw(6.5,0)node[anchor=south west]{\(\scriptstyle\!Z''\)};
        \draw[line width=1pt](7,0.95)--(7,1.45);
    \end{tikzpicture}
    }}=\vcenter{\hbox{
    \begin{tikzpicture}[scale=0.5]
        \footnotesize
        \draw[line width=1pt](-0.25,0)node[anchor=south west]{\(\scriptstyle\!Z\)}--(4.75,0);
        \draw[line width=1pt](-0.25,2.4)node[anchor=south west]{\(\scriptstyle\!A\)}--(4.75,2.4);
        \draw[line width=0.8pt,fill=black!10](0.5,0.7)rectangle(1.5,2.9);
        \draw(1,1.8)node{\(\mathcal M\)};
        \draw[line width=1pt](1.5,1.2)node[anchor=south west]{\(\scriptstyle\!k\in K\)}--(4,1.2);
        \draw(1.5,2.4)node[anchor=south west]{\(\scriptstyle\!A\)};
        \draw[line width=0.8pt,fill=black!10](2,-0.5)rectangle(3.5,0.5);
        \draw(2.75,0)node{\(\mathcal G_k\mathcal F_k\)};
        \draw[line width=1pt](2.75,0.5)--(2.75,1.2);
        \fill(2.75,1.2)circle[radius=3pt];
        \draw(3.5,0)node[anchor=south west]{\(\scriptstyle\!Z''\)};
        \draw[line width=1pt](4,0.95)--(4,1.45);
    \end{tikzpicture}
    }}
\end{equation}
holds.
Therefore, any sequence of the above operations is also included in this framework.

We now consider the work extractable when such operations are performed isothermally.
Let the initial state be \(\rho_{\p0}^{ZA}\), the state immediately after the measurement process \(\mathcal M\) be \(\rho_{\p1}^{ZKA}\), and the state immediately after the process \(\mathcal F\) be \(\rho_{\p2}^{Z'KA}\).
Let \(V_z\) denote the volume of container \(z\in Z\) in the initial configuration, and \(V_{z'}\) the volume of container \(z'\in Z'\) in the final configuration.
Then, as is well known from the thermodynamics of ideal gases, the maximal work extractable during the process \(\mathcal F\) is given by
\begin{equation}
    \label{eq:SPM extractable work}
    \begin{aligned}
        W_\mathrm{ext}&=N\beta^{-1}\biggl(H^{Z'K}\p{\rho_{\p2}^{Z'K}}-H^{ZK}\p{\rho_{\p1}^{ZK}}\\
        &\qquad-\sum_{z\in Z}\V{\rho_{\p2z}}\log V_z+\sum_{z'\in Z'}\V{\rho_{\p0z'}}\log V_{z'}\biggr).
    \end{aligned}
\end{equation}

This work can actually be achieved using SPMs. To show this, we first introduce the protocols of ``separation'' and ``mixing'' (Fig.~\ref{fig:separation and mixing}).

\begin{figure}
    \centering
    \begin{tikzpicture}[scale=0.9]
        \fill[orange,opacity=0.05](-1,0.7)rectangle(1,-0.7);
        \fill[blue,opacity=0.05](-1,0.7)rectangle(1,-0.7);
        \draw[thick] (-1,0.7)--(1,0.7)--(1,-0.7)--(-1,-0.7)--(-1,0.7);
        \draw[dashed] (-0.95,0.7)--(-0.95,-0.7);
        \draw[dashed] (0.95,0.7)--(0.95,-0.7);
        \node at(0,0) {\(\rho\)};
        \draw[->, thick] (1.5,0)--(2,0);
        \fill[orange,opacity=0.075](2.5,0.7)rectangle(4,-0.7);
        \fill[blue,opacity=0.075](3,0.7)rectangle(4.5,-0.7);
        \draw[thick] (2.5,0.7)--(4.5,0.7)--(4.5,-0.7)--(2.5,-0.7)--(2.5,0.7);
        \draw[dashed] (3,0.7)--(3,-0.7);
        \draw[dashed] (4,0.7)--(4,-0.7);
        \draw[orange,<-] (2.8,0.3)--(3.2,0.3);
        \draw[orange,->] (3.8,0.45)--(3.95,0.3)--(3.8,0.15);
        \draw[blue,->] (3.8,-0.3)--(4.2,-0.3);
        \draw[blue,<-] (3.2,-0.45)--(3.05,-0.3)--(3.2,-0.15);
        \draw[->, thick] (5,0)--(5.5,0);
        \fill[orange,opacity=0.1](6,0.7)rectangle(7,-0.7);
        \fill[blue,opacity=0.1](7,0.7)rectangle(8,-0.7);
        \draw[thick] (6,0.7)--(8,0.7)--(8,-0.7)--(6,-0.7)--(6,0.7);
        \draw[dashed] (6.97,0.7)--(6.97,-0.7);
        \draw[dashed] (7.03,0.7)--(7.03,-0.7);
        \node[orange] at(6.5,0) {\(\mathcal{M}_0\rho\)};
        \node[blue] at(7.5,0) {\(\mathcal{M}_1\rho\)};
    \end{tikzpicture}\\
    \vspace{2ex}
    (a) separation\\
    \vspace{4ex}
    \begin{tikzpicture}[scale=0.9]
        \fill[orange,opacity=0.1](-1,0.7)rectangle(0,-0.7);
        \fill[blue,opacity=0.1](0,0.7)rectangle(1,-0.7);
         \draw[thick] (-1,0.7)--(1,0.7)--(1,-0.7)--(-1,-0.7)--(-1,0.7);
        \draw[dashed] (-0.03,0.7)--(-0.03,-0.7);
        \draw[dashed] (0.03,0.7)--(0.03,-0.7);
        \node[orange] at(-0.5,0) {\(\rho_0\)};
        \node[blue] at(0.5,0) {\(\rho_1\)};
        \draw[->, thick] (1.5,0)--(2,0);
        \fill[orange,opacity=0.075](2.5,0.7)rectangle(4,-0.7);
        \fill[blue,opacity=0.075](3,0.7)rectangle(4.5,-0.7);
        \draw[thick] (2.5,0.7)--(4.5,0.7)--(4.5,-0.7)--(2.5,-0.7)--(2.5,0.7);
        \draw[dashed] (3,0.7)--(3,-0.7);
        \draw[dashed] (4,0.7)--(4,-0.7);
        \draw[orange,->] (2.8,0.3)--(3.2,0.3);
        \draw[orange,->] (3.8,0.45)--(3.95,0.3)--(3.8,0.15);
        \draw[blue,<-] (3.8,-0.3)--(4.2,-0.3);
        \draw[blue,<-] (3.2,-0.45)--(3.05,-0.3)--(3.2,-0.15);
        \draw[->, thick] (5,0)--(5.5,0);
        \fill[orange,opacity=0.05](6,0.7)rectangle(8,-0.7);
        \fill[blue,opacity=0.05](6,0.7)rectangle(8,-0.7);
        \draw[thick] (6,0.7)--(8,0.7)--(8,-0.7)--(6,-0.7)--(6,0.7);
        \draw[dashed] (6.05,0.7)--(6.05,-0.7);
        \draw[dashed] (7.95,0.7)--(7.95,-0.7);
        \node at(7,0) {\(\rho_0+\rho_1\)};
    \end{tikzpicture}\\
    \vspace{2ex}
    (b) mixing
    \caption{Protocols of (a) separation and (b) mixing implemented using SPMs. The SPMs transmit or reflect gas particles depending on the outcome of the repeatable measurement process \(\mathcal M=\p{\mathcal M_0,\mathcal M_1}\).}
    \label{fig:separation and mixing}
\end{figure}

\begin{itemize}
    \item Separation:

    Consider a repeatable binary measurement \(\mathcal M=\p{\mathcal M_0,\mathcal M_1}\).
    Using an SPM that transmits only particles with outcome \(0\) and another SPM that transmits only particles with outcome \(1\), one can separate the gas into two gases with perfectly distinguishable internal states by the process illustrated in Fig.~\ref{fig:separation and mixing}(a).
    Measurement processes with more than two outcomes can similarly be treated by separating the components one by one.
    \item Mixing:

    Consider a repeatable binary measurement \(\mathcal M=\p{\mathcal M_0,\mathcal M_1}\).
    Using the same pair of SPMs, one can mix two gases with perfectly distinguishable internal states by the process illustrated in Fig.~\ref{fig:separation and mixing}(b), assuming that \(\mathcal M_k\rho_k=\rho_k\) for \(k=0,1\).
    Measurement processes with more than two outcomes can similarly be treated by successively mixing the components one by one.
\end{itemize}

A general process \(\mathcal F\) can be realized by combining these two protocols.
Let \(\mathcal F_{kzz'}\coloneq\epsilon_{z'}\mathcal F_k\delta_z\) denote the proportion of particles initially in container \(z\in Z\) with outcome \(k\in K\) that are transferred to container \(z'\in Z'\) by the process \(\mathcal F_k\).
Then the protocol is described as follows:
\begin{equation}
    \begin{aligned}
        &\p{\rho_{\p0z}^A}_{z\in Z}=\rho_{\p0}^{ZA}\\
        &\xmapsto{\ 1\ }\p{\mathcal M_k\rho_{\p0z}^A}_{z\in Z,k\in K}=\rho_{\p1}^{ZKA}\\
        &\xmapsto{\ 2\ }\p{\mathcal F_{kzz'}\mathcal M_k\rho_{\p0z}^A}_{z\in Z,z'\in Z',k\in K}\\
        &\xmapsto{\ 3\ }\p{\sum_{z\in Z}\mathcal F_{kzz'}\mathcal M_k\rho_{\p0z}^A}_{z'\in Z',k\in K}=\rho_{\p2}^{Z'KA}\\
        &\xmapsto{\ 4\ }\p{\sum_{k\in K}\sum_{z\in Z}\mathcal F_{kzz'}\mathcal M_k\rho_{\p0z}^A}_{z'\in Z'}=\rho_{\p2}^{Z'A}
    \end{aligned}
\end{equation}
The four steps are as follows:
\begin{enumerate}
    \item Perform ``separation'' independently in each container \(z\in Z\).
    \item Divide each container \(\p{z,k}\in ZK\) according to the proportions transferred to each \(z'\in Z'\).
    \item Merge the containers differing only in \(z\in Z\).
    \item Perform ``mixing'' independently for each \(z'\in Z'\).
\end{enumerate}
The work extracted during this protocol coincides with Eq.~\eqref{eq:SPM extractable work}.

We consider an isothermal cycle composed of operations implemented by SPMs as described above and refer to such a cycle as an \textit{SPM cycle}.
The general form of an SPM cycle is illustrated in Fig.~\ref{fig:SPM cycle}.
The cycle consists of \(n\) steps, and the \(i\)-th step (\(i=0,\dots,n-1\)) proceeds as follows.

\begin{figure}
    \centering
    \begin{tikzpicture}
        \draw[line width=2pt](-0.25,0.6)node[anchor=south west]{\(\scriptstyle\!X^{\p i}\)}--(0.5,0.6);
        \draw[line width=1pt](-0.25,3.6)node[anchor=south west]{\(\scriptstyle\!A\)}--(7.75,3.6);
        \draw[line width=1pt](0.5,0.6)--(1,0)node[anchor=south west]{\(\scriptstyle\!\!y\in Y^{\p i}\)}--(5,0)--(5.5,0.6);
        \draw[line width=1pt](0.5,0.6)--(1,1.2)node[anchor=south west]{\(\scriptstyle\!Z^{\p i}_y\)}--(3,1.2);
        \draw[line width=1pt](2,0)--(2,1.9);
        \fill(2,0)circle[radius=3pt];
        \draw[line width=0.8pt,fill=black!10](1.5,1.9)rectangle(2.5,4.1);
        \draw(2,3)node{\(\mathcal M^{\p i}_y\)};
        \draw[line width=1pt](2.5,2.4)node[anchor=south west]{\(\scriptstyle\!k\in K^{\p i}_y\)}--(4.5,2.4);
        \draw(2.5,3.6)node[anchor=south west]{\(\scriptstyle\!A\)};
        \draw[line width=1pt](3.5,0)--(3.5,0.7);
        \draw[line width=1pt](3.5,1.7)--(3.5,2.4);
        \fill(3.5,0)circle[radius=3pt];
        \fill(3.5,2.4)circle[radius=3pt];
        \draw[line width=0.8pt,fill=black!10](3,0.7)rectangle(4,1.7);
        \draw(3.5,1.2)node{\(\mathcal F^{\p i}_{yk}\)};
        \draw[line width=1pt](4,1.2)node[anchor=south west]{\(\scriptstyle\!Z'^{\p i}_y\)}--(5,1.2)--(5.5,0.6);
        \draw[line width=1pt](4.5,2.15)--(4.5,2.65);
        \draw[line width=2pt](5.5,0.6)node[anchor=north west]{\(\scriptstyle\!\!\!x\in X^{\p{i+1}}\)}--(7.75,0.6);
        \draw[line width=0.8pt,fill=black!10](6,3.1)rectangle(7,4.1);
        \draw(6.5,3.6)node{\(\mathcal U^{\p i}_x\)};
        \draw[line width=1pt](6.5,0.6)--(6.5,3.1);
        \fill(6.5,0.6)circle[radius=3pt];
        \draw(7,3.6)node[anchor=south west]{\(\scriptstyle\!A\)};
    \end{tikzpicture}
    \caption{The \(i\)-th step in the general form of an SPM cycle, namely, an isothermal cycle implemented using SPMs.}
    \label{fig:SPM cycle}
\end{figure}

\begin{enumerate}
    \item The initial state is a state \(\rho_{\p{i,0}}\) of the composite system of a classical system \(X^{\p i}\) and an internal GPT system \(A\).
    \item A reversible process \(X^{\p i}\to\bigoplus_{y\in Y^{\p i}}Z^{\p i}_y\) is performed.
    \item Depending on the value \(y\in Y^{\p i}\), an operation implemented by SPMs is applied to the system \(Z^{\p i}_yA\):
    \begin{enumerate}
        \item[3-1.] A repeatable measurement process \(\mathcal M^{\p i}_y:A\to K^{\p i}_yA\) is performed.
        We denote the resulting state of the system \(\bigoplus_{y\in Y^{\p i}}Z^{\p i}_yK^{\p i}_yA\) by \(\rho_{\p{i,1}}\).
        \item[3-2.] A feedback process \(\mathcal F^{\p i}_{yk}:Z^{\p i}_y\to Z'^{\p i}_y\) depending on the value \(k\in K^{\p i}_y\) is performed.
        We denote the resulting state of the system \(\bigoplus_{y\in Y^{\p i}}Z'^{\p i}_yK^{\p i}_yA\) by \(\rho_{\p{i,2}}\).
        \item[3-3.] The system \(K^{\p i}_y\) is discarded.
    \end{enumerate}
    \item A reversible process \(\bigoplus_{y\in Y^{\p i}}Z'^{\p i}_y\to X^{\p{i+1}}\) is performed.
    \item A reversible process \(\mathcal U^{\p i}_x:A\to A\) depending on the value \(x\in X^{\p{i+1}}\) is performed.
    We denote the resulting state of the system \(X^{\p{i+1}}A\) by \(\rho_{\p{i+1,0}}\), which serves as the initial state for the next step.
\end{enumerate}

After \(n\) steps, the final state returns to the initial state: \(X^{\p n}=X^{\p0}\) and \(\rho_{\p{n,0}}=\rho_{\p{0,0}}\).

Here, the reversible process \(X^{\p i}\to\bigoplus_{y\in Y^{\p i}}Z^{\p i}_y\) represents relabeling the containers and partitioning them into groups indexed by \(y\in Y^{\p i}\).
Moreover, a process that inserts walls into a container and divides it into several containers is incorporated as a step in which the measurement process \(\mathcal M^{\p i}_y\) is trivial.
Previous works \cite{Hanggi2013,Krumm2015,Krumm2017,Minagawa2022} also allow reversible transformations of the internal states of the gas independently in each container; such operations are represented here by \(\mathcal{U}_{x}^{\p{i}}\).

We denote by
\begin{equation}
    W_\mathrm{ext}\coloneq\sum_{i=0}^{n-1}W_\mathrm{ext}^{\p i}
\end{equation}
the total extractable work of the SPM cycle, where each \(W_\mathrm{ext}^{\p i}\) is given by the analogue of Eq.~\eqref{eq:SPM extractable work} for the \(i\)-th step.

We then derive the following upper bound on the work extractable from the SPM cycle.
\begin{lem}
    \label{lem:SPM extractable work}
    Let \(S^A\) be a concave entropy that is invariant under reversible processes.
    Then
    \begin{equation}
        \begin{aligned}
            W_{\mathrm{ext}}&\le N\beta^{-1}\sum_{i=0}^{n-1}\bigggl(\eval{\aab{\aab{\tilde S^{K^{\p i}_yA}-S^A}_{Z'^{\p i}_y}}_{y\in Y^{\p i}}}_{\p{i,2}}\\
            &\hspace{8em}{}+\eval{\aab{\aab{S^A}_{Z^{\p i}_y}}_{y\in Y^{\p i}}}_{\p{i,0}}\hspace{-5pt}\\
            &\hspace{8em}{}-\eval{\aab{\aab{\tilde S^{K^{\p i}_yA}}_{Z^{\p i}_y}}_{y\in Y^{\p i}}}_{\p{i,1}}\bigggr).
        \end{aligned}
    \end{equation}
\end{lem}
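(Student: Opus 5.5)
The plan is to bound each step's contribution $W_\mathrm{ext}^{(i)}$ separately, using the analogue of Eq.~\eqref{eq:SPM extractable work} for the $i$-th step. The volume terms and the classical Shannon terms should telescope over the cycle. To the bound we then add the two $S^A$ terms of the statement: they appear with opposite signs and also telescope to zero over the cycle, but only because of invariance under reversible processes. Concavity enters at exactly one point, where the feedback merges containers.

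\emph{Step 1 (volume terms).} The relabelings $X^{(i)}\to\bigoplus_y Z^{(i)}_y$ and $\bigoplus_y Z'^{(i)}_y\to X^{(i+1)}$ are reversible, hence bijections of containers preserving volumes and occupation probabilities. The internal maps $\mathcal U^{(i)}_x$ do not change the occupation probabilities $\V{\rho_{z}}$. Moreover, the final volumes of step $i$ are the initial volumes of step $i+1$, and the cycle closes. So $\sum_z \V{\rho_z}\log V_z$ evaluated at the start of step $i+1$ equals $\sum_{z'}\V{\rho_{z'}}\log V_{z'}$ at the end of step $i$, and the volume contributions cancel in $\sum_i W^{(i)}_\mathrm{ext}$.

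\emph{Step 2 (rewriting the Shannon terms).} Using the definition of the extended entropy, $H^K+\aab{S^A}_K=\tilde S^{KA}$, I write
\begin{equation}
H^{YZ'K}\p{\rho_{(i,2)}}=H^{YZ'}\p{\rho_{(i,2)}}+\aab{\aab{\tilde S^{K^{(i)}_yA}}_{Z'^{(i)}_y}}_{y}\Big|_{(i,2)}-\aab{S^A}_{YZ'K}\Big|_{(i,2)},
\end{equation}
and analogously for $H^{YZK}\p{\rho_{(i,1)}}$. Here $H^{YZ'K}$ can replace $H^{Z'K}$ because $y$ is a function of the container label. Since the relabelings are reversible, $H^{YZ}\p{\rho_{(i,1)}}=H^{X^{(i)}}\p{\rho_{(i,0)}}$ and $H^{YZ'}\p{\rho_{(i,2)}}=H^{X^{(i+1)}}\p{\rho_{(i+1,0)}}$. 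These classical terms therefore telescope to zero by $X^{(n)}=X^{(0)}$ and $\rho_{(n,0)}=\rho_{(0,0)}$.

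\emph{Step 3 (the concavity step, the main obstacle).} I need $\aab{S^A}_{YZ'K}|_{(i,2)}\ge \aab{S^A}_{YZK}|_{(i,1)}$. The feedback $\mathcal F^{(i)}_{yk}$ acts only on the container label. For fixed $(y,k)$, it splits each conditional internal state $\mathcal M_k\rho_{(i,0)z}$ proportionally by $\mathcal F_{kzz'}$ and then merges over $z$. Splitting leaves the normalized internal states unchanged. The unnormalized state in $(y,z',k)$ is $\sum_z \mathcal F_{kzz'}\mathcal M_k\rho_{z}$, a mixture of those states, so concavity of $S^A$, applied to each $(y,z',k)$ and then averaged, gives the inequality. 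Getting the weights right here, and checking that they match the average at $(i,1)$, is the delicate bookkeeping. This yields
\begin{equation}
W_\mathrm{ext}^{(i)}\le N\beta^{-1}\Bigl(\bigl[H^{X^{(i+1)}}-H^{X^{(i)}}\bigr]+\aab{\aab{\tilde S^{KA}}_{Z'}}_y\big|_{(i,2)}-\aab{\aab{\tilde S^{KA}}_{Z}}_y\big|_{(i,1)}\Bigr)+\text{(volume terms)}.
\end{equation}

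\emph{Step 4 (adding the $S^A$ terms).} Finally, I add
\begin{equation}
-\aab{\aab{S^A}_{Z'}}_y\big|_{(i,2)}+\aab{\aab{S^A}_{Z}}_y\big|_{(i,0)},
\end{equation}
where the first $S^A$ is evaluated on the $K$-discarded conditional state. Because $S^A$ is invariant under the reversible $\mathcal U^{(i)}_x$ and the relabelings are bijections, the first term equals $-\aab{S^A}_{X^{(i+1)}}|_{(i+1,0)}$ and the second equals $\aab{S^A}_{X^{(i)}}|_{(i,0)}$. Summed over the cycle they vanish, so adding them does not change the bound. Summing over $i$ then gives exactly the stated inequality.
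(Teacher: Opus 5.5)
Your proposal is correct and takes essentially the same route as the paper. You telescope the volume terms, apply concavity of $S^A$ exactly once (to the merging of containers during the feedback $\mathcal F^{(i)}_{yk}$), and use invariance under the reversible relabelings and $\mathcal U^{(i)}_x$ to telescope across step boundaries. The only difference is bookkeeping: the paper first upgrades the Shannon terms to extended entropies $\tilde S$ and telescopes $\tilde S^{X^{(i)}A}$ as a whole, whereas you telescope its Shannon part $H^{X^{(i)}}$ and its $\langle S^A\rangle_{X^{(i)}}$ part separately.
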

Here, we use the abbreviation
\begin{equation}
    \begin{aligned}
        &\eval{\aab{\aab{S^A}_{Z'^{\p i}_y}}_{y\in Y^{\p i}}}_{\p{i,2}}\\
        &\quad\coloneq\aab{\aab{S^A}_{Z'^{\p i}_y}}_{y\in Y^{\p i}}\p{\p{\mathrm{tr}_{K_y^{\p i}}\rho_{\p{i,2}y}}_{y\in Y^{\p i}}}
    \end{aligned}
\end{equation}
and similarly for other terms.

\begin{rmk}
    In the setting considered here, where all but one of the systems are classical, concavity and subadditivity are equivalent for an entropy \(S\) that is consistent with the direct sum, as stated in Subsection \ref{Definitions and properties of entropies}.
    See also Remark \ref{rmk:Sagawa--Ueda}.
\end{rmk}

Consequently, we obtain the following conditions on measurement processes ensuring that no positive work can be extracted from an SPM cycle composed of such processes.
\begin{thm}
    \label{thm:weak SPM condition}
    For the SPM cycle, we have \(W_{\mathrm{ext}}\le 0\) if there exists a concave entropy that is invariant under reversible processes such that
    \begin{equation}
        \label{eq:weak SPM condition}
        \begin{aligned}
            S^A\p{\hat{\rho}_{\p{i,0}yz}^A}&\le\tilde S^{K^{\p i}_yA}\p{\mathcal M^{\p i}_y\hat{\rho}_{\p{i,0}yz}^A},\\
            \tilde S^{K^{\p i}_yA}\p{\mathcal M^{\p i}_y\hat{\rho}_{\p{i,2}yz'}^A}&\le S^A\p{\mathrm{tr}_{K^{\p i}_y}\mathcal M^{\p i}_y\hat{\rho}_{\p{i,2}yz'}^A}
        \end{aligned}
    \end{equation}
    for all \(i=0,\dots,n-1\), \(y\in Y^{\p i}\), \(z\in Z^{\p i}_y\) and \(z'\in Z'^{\p i}_y\).
\end{thm}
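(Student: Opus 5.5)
The plan is to start from Lemma \ref{lem:SPM extractable work} and show that, under the two conditions in Eq.~\eqref{eq:weak SPM condition}, each of the three groups of terms in the bound combines with the others into a telescoping sum over the cycle. Write the bound for step \(i\) as \(N\beta^{-1}(T_2^{\p i}+T_0^{\p i}-T_1^{\p i})\), with \(T_0,T_1,T_2\) the three averaged terms evaluated at \(\p{i,0}\), \(\p{i,1}\), \(\p{i,2}\). The aim is to establish
\begin{equation}
    T_2^{\p i}\le\aab{\aab{S^A}_{Z'^{\p i}_y}}_{y}\Big|_{\p{i,2}}-\aab{\aab{S^A}_{Z'^{\p i}_y}}_{y}\Big|_{\p{i,2}}+\dots
\end{equation}
More directly, the goal is to show \(T_0^{\p i}-T_1^{\p i}\le0\) and \(T_2^{\p i}\le0\) termwise, and then to sum.

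For the first inequality, note that \(\rho_{\p{i,1}}\) is obtained from \(\rho_{\p{i,0}}\) (after relabeling into \(\bigoplus_y Z^{\p i}_y\)) by applying \(\mathcal M^{\p i}_y\) in each container. The conditional state of \(K^{\p i}_yA\) in container \(z\) is therefore \(\mathcal M^{\p i}_y\hat\rho^A_{\p{i,0}yz}\), with the same container weights \(\V{\rho_{\p{i,0}yz}}\), since \(\mathcal M^{\p i}_y\) is trace preserving. Hence \(T_1^{\p i}=\sum_{y,z}\V{\rho_{\p{i,0}yz}}\tilde S(\mathcal M^{\p i}_y\hat\rho_{\p{i,0}yz})\), and the first condition in Eq.~\eqref{eq:weak SPM condition} gives \(T_0^{\p i}\le T_1^{\p i}\) term by term.

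For the second inequality, the state \(\rho_{\p{i,2}}\) in container \(z'\) is \(\bigl(\sum_z\mathcal F_{kzz'}\mathcal M_k\rho_{\p0z}\bigr)_k\). By repeatability, \(\mathcal M_{k'}\mathcal M_k=\delta_{kk'}\mathcal M_k\), so this conditional \(K^{\p i}_yA\) state equals \(\mathcal M^{\p i}_y\hat\rho^A_{\p{i,2}yz'}\), where \(\hat\rho^A_{\p{i,2}yz'}\) is its \(K\)-marginal. This identification is the key point. Once it is made, the second condition says \(\tilde S(\rho_{\p{i,2}yz'}^{KA})\le S(\mathrm{tr}_K\,\cdot)\), which is exactly \(T_2^{\p i}\le0\).

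Combining the two gives \(W_\mathrm{ext}^{\p i}\le0\) for each step, and hence \(W_\mathrm{ext}\le0\). The cyclicity is not even needed for this termwise argument, although Lemma \ref{lem:SPM extractable work} presumably used it in arranging the terms. I expect the main obstacle to be the careful verification that \(\rho_{\p{i,2}}\) in each \(z'\) has the form \(\mathcal M^{\p i}_y\) applied to its reduced state. This requires writing \(\mathcal M_k\bigl(\sum_{k'}\sum_z\mathcal F_{k'zz'}\mathcal M_{k'}\rho_{\p0z}\bigr)=\sum_z\mathcal F_{kzz'}\mathcal M_k\rho_{\p0z}\) using linearity and repeatability, while keeping track of normalizations by \(\V{\rho_{\p{i,2}yz'}}\).
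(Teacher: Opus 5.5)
Your argument is correct and is essentially the paper's proof. You invoke Lemma~\ref{lem:SPM extractable work}, identify the conditional \(K^{(i)}_yA\) states at \((i,1)\) and \((i,2)\) as \(\mathcal M^{(i)}_y\) applied to their \(A\)-marginals (the latter via repeatability, exactly as you compute), and apply the two conditions summand by summand.

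One correction: the lemma bounds only the cyclic total \(W_\mathrm{ext}\), because its proof uses \(\rho_{(n,0)}=\rho_{(0,0)}\) to telescope the volume terms and the \(\tilde S\) terms at \((i,0)\) and \((i,2)\). So what you have shown is that each summand of that bound is nonpositive, not that \(W^{(i)}_\mathrm{ext}\le0\) for each step. The per-step claim is false in general: an isothermal expansion with trivial measurement gives \(W^{(i)}_\mathrm{ext}>0\). Cyclicity is therefore genuinely needed, contrary to your remark.
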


\begin{cor}
    \label{cor:strong SPM condition}
    Positive work cannot be extracted through an SPM cycle whenever there exists a concave entropy invariant under reversible processes such that every measurement process \(\mathcal M:A\to KA\) performed by SPMs satisfies
    \begin{equation}
        \label{eq:strong SPM condition}
        S^A\p{\rho^A}\le\tilde S^{KA}\p{\mathcal M\rho^A}\le S^A\p{\mathrm{tr}_K\mathcal M\rho^A}
    \end{equation}
    for all \(\rho^A\in\mathrm{St}\p A\).
\end{cor}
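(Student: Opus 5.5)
The plan is to derive Corollary \ref{cor:strong SPM condition} directly from Theorem \ref{thm:weak SPM condition}. We check that the uniform inequality \eqref{eq:strong SPM condition}, assumed for every state \(\rho^A\in\mathrm{St}\p A\) and every SPM measurement process, implies the state-specific hypotheses \eqref{eq:weak SPM condition} for every SPM cycle. Fix an arbitrary SPM cycle built from SPM-implementable measurement processes \(\mathcal M^{\p i}_y\). Fix also the concave entropy \(S^A\), invariant under reversible processes, supplied by the hypothesis. Its extended entropy \(\tilde S\) is then defined on every \(K^{\p i}_yA\).

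For the first inequality of \eqref{eq:weak SPM condition}, take \(\rho^A=\hat\rho^A_{\p{i,0}yz}\). This is a normalized state of \(A\) whenever the component is nonzero; components with \(\V{\rho_{\p{i,0}yz}}=0\) carry zero weight in all the averages of Lemma \ref{lem:SPM extractable work} and can be discarded. The left inequality of \eqref{eq:strong SPM condition} applied to \(\mathcal M=\mathcal M^{\p i}_y\) then gives exactly \(S^A\p{\hat\rho^A_{\p{i,0}yz}}\le\tilde S^{K^{\p i}_yA}\p{\mathcal M^{\p i}_y\hat\rho^A_{\p{i,0}yz}}\).

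For the second inequality, take \(\rho^A=\hat\rho^A_{\p{i,2}yz'}\). The right inequality of \eqref{eq:strong SPM condition} gives \(\tilde S^{K^{\p i}_yA}\p{\mathcal M^{\p i}_y\hat\rho^A_{\p{i,2}yz'}}\le S^A\p{\mathrm{tr}_{K^{\p i}_y}\mathcal M^{\p i}_y\hat\rho^A_{\p{i,2}yz'}}\). Here \(\hat\rho^A_{\p{i,2}yz'}\) is the internal state obtained from \(\rho_{\p{i,2}}\) after discarding \(K^{\p i}_y\); it is a legitimate element of \(\mathrm{St}\p A\), so \eqref{eq:strong SPM condition} applies to it without modification.

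Since both conditions of \eqref{eq:weak SPM condition} now hold for all \(i\), \(y\), \(z\), \(z'\), Theorem \ref{thm:weak SPM condition} yields \(W_\mathrm{ext}\le0\). The cycle was arbitrary, so no SPM cycle extracts positive work.

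There is no substantive obstacle, because all the work is carried by Theorem \ref{thm:weak SPM condition}. The only point needing care is the bookkeeping: confirming that every state to which \eqref{eq:weak SPM condition} refers is a normalized state of \(A\) fed into an SPM measurement process. This holds because \eqref{eq:strong SPM condition} is assumed for all states and for every measurement process performed by SPMs, and the zero-weight components are handled as above.
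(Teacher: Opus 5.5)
Your proposal is correct and matches the paper, which gives no separate proof and presents the corollary as an immediate consequence of Theorem \ref{thm:weak SPM condition}, obtained exactly by instantiating the uniform inequality \eqref{eq:strong SPM condition} at the states \(\hat\rho^A_{(i,0)yz}\) (left inequality) and \(\hat\rho^A_{(i,2)yz'}\) (right inequality). Your remark on discarding zero-weight components is a harmless bit of bookkeeping that the paper leaves implicit.
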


Theorem \ref{thm:weak SPM condition} and Corollary \ref{cor:strong SPM condition} have several important consequences.
First, when \(A\) is a quantum system and \(S^A\) is taken to be the von Neumann entropy, no positive work can be extracted from an SPM cycle using L\"uders measurement processes.

More generally, suppose that there exists an entropy \(S^A\) that is concave and invariant under reversible processes, and satisfies
\begin{equation}
    \label{Krumm's condition for entropy}
    S^A\p{\mathrm{tr}_X\rho^{XA}}=\tilde S^{XA}\p{\rho^{XA}}
\end{equation}
for every ensemble of perfectly distinguishable states \(\rho^{XA}=\p{\rho_x^A}_{x\in X}\).
Then positive work cannot be extracted whenever every measurement process \(\mathcal M^{\p i}_y\) satisfies
\begin{equation}
    S^A\p{\hat\rho_{\p{i,0}yz}^A}\le S^A\p{\mathrm{tr}_{K^{\p i}_y}\mathcal M^{\p i}_y\hat\rho_{\p{i,0}yz}^A}
\end{equation}
for all \(z\in Z^{\p i}_y\).
In particular, this condition is satisfied whenever every \(\mathcal M^{\p i}_y\) leaves the corresponding input state undisturbed, namely \(\hat\rho_{\p{i,0}yz}^A=\mathrm{tr}_{K^{\p i}_y}\mathcal M^{\p i}_y\hat\rho_{\p{i,0}yz}^A\).

The measurement entropy \(S_\mathrm{meas}\) is concave, invariant under reversible processes, and consistent with the direct sum.
Likewise, the infinity entropy \(S_\mathrm{acc}^\infty\) associated with the accessible information entropy is invariant under reversible processes and consistent with the direct sum, and is concave provided that the conjectured Holevo bound in Ref.~\cite{Kimura2016} holds.
Thus, these entropies satisfy the assumptions of Theorem \ref{thm:weak SPM condition} and Corollary \ref{cor:strong SPM condition}.

On the other hand, the accessible information entropy \(S_\mathrm{acc}\) is invariant under reversible processes and consistent with the direct sum but is not necessarily concave \cite{Kimura2010}.
Moreover, \(S_\mathrm{acc}\) satisfies the second inequality of Eq.~\eqref{eq:strong SPM condition} for strongly repeatable measurement processes.
Therefore, if every measurement process \(\mathcal M\) performed by SPMs is strongly repeatable and satisfies \(S_\mathrm{acc}^A\p{\rho^A}\le S_\mathrm{acc}^{KA}\p{\mathcal M\rho^A}\) but nevertheless \(W_\mathrm{ext}>0\), then the positive work extraction must originate from the failure of concavity of \(S_\mathrm{acc}\).

Finally, the comparison between the measurement entropy \(S_\mathrm{meas}\) and the accessible information entropy \(S_\mathrm{acc}\) gives further insight into the origin of possible work extraction.
When every \(\mathcal M^{\p i}_y\) is a strongly repeatable fine-grained measurement process, the extractable work is bounded as
\begin{equation}
    \begin{aligned}
        W_{\mathrm{ext}}&\le N\beta^{-1}\sum_{i=0}^{n-1}\:\biggl\langle\Bigl\langle S_{\mathrm{acc}}^A-S_{\mathrm{meas}}^A\\
        &\qquad-\aab{S_{\mathrm{acc}}^A-S_{\mathrm{meas}}^A}_{K^{\p i}_y}\Bigr\rangle_{Z'^{\p i}_y}\biggr\rangle_{y\in Y^{\p i}}\Bigg|_{\p{i,2}}.
    \end{aligned}
\end{equation}
Thus, any positive work extraction from such an SPM cycle must originate from the discrepancy between \(S_\mathrm{meas}^A\) and \(S_\mathrm{acc}^A\).

These observations highlight a characteristic feature of GPTs: unlike in quantum theory, there is generally no unique entropy satisfying all the fundamental properties of the von Neumann entropy.
As a result, the relation between entropy nondecrease under measurement processes and the impossibility of work extraction becomes nontrivial in GPTs.
In particular, different entropic properties, such as concavity, consistency with the direct sum, and monotonicity under measurement processes, play distinct roles in constraining thermodynamic cycles.
The above results therefore provide a structural understanding of how thermodynamic consistency in GPTs depends on the interplay among different entropic properties.

\subsection{Examples}
\label{Examples}

In this subsection, we present examples of SPM cycles from which positive work can be extracted.

\subsubsection{Square system}

We consider an ideal gas whose internal GPT state space \(A\) is given by the square
\begin{equation}
    \mathrm{St}\p A\coloneq\mathrm{conv}\B{\rho_n\coloneq\p{\cos\frac{n\pi}2,\sin\frac{n\pi}2,1}\in\mathds R^3\mid n\in\mathds Z}.
\end{equation}
The square system is also known as the gbit and appears as a local subsystem in GPT realizations of the Popescu--Rohrlich box \cite{Popescu1994,Barrett2007}.

We consider the SPM cycle illustrated in Fig.~\ref{fig:square}.
Initially, there are four containers, each containing an ideal gas of \(N/4\) particles with internal state \(\frac34\rho_n+\frac14\rho_{n+2}\) for \(n=0,\dots,3\).

\begin{figure}
    \begin{tikzpicture}[scale=1.25]
        \def\scale{0.5}
        \draw[left](-0.75,-0.75)node{(i)};
        \draw[left](-0.75,-3)node{(ii)};
        \draw(-0.5,-3)node{\(\begin{cases}\\\\\\\\\\\\\end{cases}\)};
        \draw[left](-0.75,-5.25)node{(iii)};
        \draw[left](-0.75,-6.75)node{(iv)};
        \draw[left](-0.75,-8.25)node{(v)};
        \def\square{
            \coordinate(a0)at({cos((\x+0)*90)},{sin((\x+0)*90)});
            \coordinate(a1)at({cos((\x+1)*90)},{sin((\x+1)*90)});
            \coordinate(a2)at({cos((\x+2)*90)},{sin((\x+2)*90)});
            \coordinate(a3)at({cos((\x+3)*90)},{sin((\x+3)*90)});
            \draw(a0)--(a1)--(a2)--(a3)--cycle;
        }
        \foreach\x in{0,...,3}{
            \draw(\x,0)node{
                \begin{tikzpicture}[scale=\scale]
                    \square
                    \draw[dotted](a0)--(a2);
                    \draw[dotted]({cos(\x*90)/2+cos((\x+1)*90)/2},{sin(\x*90)/2+sin((\x+1)*90)/2})--({cos(\x*90)/2+cos((\x-1)*90)/2},{sin(\x*90)/2+sin((\x-1)*90)/2});
                    \fill({cos(\x*90)/2},{sin(\x*90)/2})circle[radius=3pt];
                    \draw(0,0)node{\(\scriptstyle\frac N4\)};
                \end{tikzpicture}
            };
            \draw[->](\x,-0.5)--(\x,-1);
            \draw[left](\x,-0.75)node{s.};
            \draw(\x,-1.5)node{
                \begin{tikzpicture}[scale=\scale]
                    \square
                    \draw[dotted](a0)--(a2);
                    \draw[dotted]({cos(\x*90)/2+cos((\x+1)*90)/2},{sin(\x*90)/2+sin((\x+1)*90)/2})--({cos(\x*90)/2+cos((\x-1)*90)/2},{sin(\x*90)/2+sin((\x-1)*90)/2});
                    \fill(a0)circle[radius=3pt];
                    \fill(a2)circle[radius=3pt];
                    \draw({cos(\x*90)/2.5},{sin(\x*90)/2.5})node{\(\scriptstyle\frac{3N}{16}\)};
                    \draw(-{cos(\x*90)/2},-{sin(\x*90)/2})node{\(\scriptstyle\frac N{16}\)};
                \end{tikzpicture}
            };
        };
        \draw[->](-0.1,-2)--(-0.1,-2.5);
        \draw[->](0.1,-2)--(1.9,-2.5);
        \draw[->](0.9,-2)--(0.9,-2.5);
        \draw[->](1.1,-2)--(2.9,-2.5);
        \draw[->](2.1,-2)--(2.1,-2.5);
        \draw[->](1.9,-2)--(0.1,-2.5);
        \draw[->](3.1,-2)--(3.1,-2.5);
        \draw[->](2.9,-2)--(1.1,-2.5);
        \foreach\x in{0,...,3}\draw(\x,-3)node{
            \begin{tikzpicture}[scale=\scale]
                \square
                \fill(a0)circle[radius=3pt];
                \draw({cos(\x*90)/2},{sin(\x*90)/2})node{\(\scriptstyle\frac N4\)};
            \end{tikzpicture}
        };
        \draw[->](-0.1,-3.5)--(-0.1,-4);
        \draw[->](0.1,-3.5)--(2.9,-4);
        \draw[->](1.1,-3.5)--(0.9,-4);
        \draw[->](0.9,-3.5)--(0.1,-4);
        \draw[->](2.1,-3.5)--(1.9,-4);
        \draw[->](1.9,-3.5)--(1.1,-4);
        \draw[->](3.1,-3.5)--(3.1,-4);
        \draw[->](2.9,-3.5)--(2.1,-4);
        \foreach\x in{0,...,3}{
            \draw(\x,-4.5)node{
                \begin{tikzpicture}[scale=\scale]
                    \square
                    \fill(a0)circle[radius=3pt];
                    \fill(a1)circle[radius=3pt];
                    \draw({cos(\x*90)/2},{sin(\x*90)/2})node{\(\scriptstyle\frac N8\)};
                    \draw({cos((\x+1)*90)/2},{sin((\x+1)*90)/2})node{\(\scriptstyle\frac N8\)};
                \end{tikzpicture}
            };
            \draw[->](\x,-5)--(\x,-5.5);
            \draw[left](\x,-5.25)node{m.};
            \draw(\x,-6)node{
                \begin{tikzpicture}[scale=\scale]
                    \square
                    \fill({cos(\x*90)/2+cos((\x+1)*90)/2},{sin(\x*90)/2+sin((\x+1)*90)/2})circle[radius=3pt];
                    \draw({cos(\x*90)/5+cos((\x+1)*90)/5},{sin(\x*90)/5+sin((\x+1)*90)/5})node{\(\scriptstyle\frac N4\)};
                \end{tikzpicture}
            };
        };
        \draw[->](-0.1,-6.5)--(-0.1,-7);
        \draw[->](0.1,-6.5)--(0.9,-7);
        \draw[->](0.9,-6.5)--(1.1,-7);
        \draw[->](1.1,-6.5)--(1.9,-7);
        \draw[->](1.9,-6.5)--(2.1,-7);
        \draw[->](2.1,-6.5)--(2.9,-7);
        \draw[->](3.1,-6.5)--(3.1,-7);
        \draw[->](2.9,-6.5)--(0.1,-7);
        \foreach\x in{0,...,3}{
            \draw(\x,-7.5)node{
                \begin{tikzpicture}[scale=\scale]
                    \square
                    \draw[dotted](a0)--(a2);
                    \draw[dotted]({cos(\x*90)/2+cos((\x+1)*90)/2},{sin(\x*90)/2+sin((\x+1)*90)/2})--({cos(\x*90)/2+cos((\x-1)*90)/2},{sin(\x*90)/2+sin((\x-1)*90)/2});
                    \fill({cos(\x*90)/2+cos((\x+1)*90)/2},{sin(\x*90)/2+sin((\x+1)*90)/2})circle[radius=3pt];
                    \fill({cos(\x*90)/2+cos((\x-1)*90)/2},{sin(\x*90)/2+sin((\x-1)*90)/2})circle[radius=3pt];
                    \draw({cos(\x*90)/15+cos((\x+1)*90)/2.5},{sin(\x*90)/15+sin((\x+1)*90)/2.5})node{\(\scriptstyle\frac N8\)};
                    \draw({cos(\x*90)/15+cos((\x-1)*90)/2.5},{sin(\x*90)/15+sin((\x-1)*90)/2.5})node{\(\scriptstyle\frac N8\)};
                \end{tikzpicture}
            };
            \draw[->](\x,-8)--(\x,-8.5);
            \draw(\x,-9)node{
                \begin{tikzpicture}[scale=\scale]
                    \square
                    \draw[dotted](a0)--(a2);
                    \draw[dotted]({cos(\x*90)/2+cos((\x+1)*90)/2},{sin(\x*90)/2+sin((\x+1)*90)/2})--({cos(\x*90)/2+cos((\x-1)*90)/2},{sin(\x*90)/2+sin((\x-1)*90)/2});
                    \fill({cos(\x*90)/2},{sin(\x*90)/2})circle[radius=3pt];
                    \draw(0,0)node{\(\scriptstyle\frac N4\)};
                \end{tikzpicture}
            };
        };
    \end{tikzpicture}
    \caption{An SPM cycle from which positive work \(W_\mathrm{ext}=\frac N4\beta^{-1}\log\frac{27}{16}\) can be extracted for a gas whose internal GPT state space is a square. In the figure, ``s.'' and ``m.'' denote the separation and mixing protocols (Fig.~\ref{fig:separation and mixing}), respectively.}
    \label{fig:square}
\end{figure}

We first perform the separation protocol (Fig.~\ref{fig:separation and mixing}(a)) on each gas using SPMs implementing measure-and-prepare processes for the perfectly distinguishable pairs \(\p{\rho_n,\rho_{n+2}}\) for \(n=0,\dots,3\).
After this step, each original container is separated into two containers: one containing \(3N/16\) particles in state \(\rho_n\), and the other containing \(N/16\) particles in state \(\rho_{n+2}\) (Fig.~\ref{fig:square}(i)).
The work required for this separation step is
\begin{equation}
    N\beta^{-1}H\p{\p{\frac34,\frac14}}.
\end{equation}

Next, containers containing gases with the same internal state are combined and repartitioned, yielding four pairs of containers, each containing \(N/8\) particles in state \(\rho_n\) and \(N/8\) particles in state \(\rho_{n+1}\) for \(n=0,\dots,3\) (Fig.~\ref{fig:square}(ii)).

We then perform the mixing protocol (Fig.~\ref{fig:separation and mixing}(b)) on each pair using SPMs implementing measure-and-prepare processes for the perfectly distinguishable pairs \(\p{\rho_n,\rho_{n+1}}\) for \(n=0,\dots,3\).
As a result, we obtain four containers, each containing \(N/4\) particles in state \(\frac12\rho_n+\frac12\rho_{n+1}\) (Fig.~\ref{fig:square}(iii)).
The work extracted in this mixing step is
\begin{equation}
    N\beta^{-1}H\p{\p{\frac12,\frac12}}.
\end{equation}

Next, the containers are repartitioned into four pairs, each containing \(N/8\) particles in state \(\frac12\rho_n+\frac12\rho_{n+1}\) and \(N/8\) particles in state \(\frac12\rho_n+\frac12\rho_{n-1}\) for \(n=0,\dots,3\) (Fig.~\ref{fig:square}(iv)).

Finally, we mix each pair without measurement.
After this step, the system returns to the initial state (Fig.~\ref{fig:square}(v)).

Therefore, the total extractable work of the cycle is
\begin{equation}
    \begin{aligned}
        W_\mathrm{ext}&=N\beta^{-1}H\p{\p{\frac12,\frac12}}-N\beta^{-1}H\p{\p{\frac34,\frac14}}\\
        &=\frac N4\beta^{-1}\log\frac{27}{16}>0.
    \end{aligned}
\end{equation}
Thus, positive work can be extracted from the SPM cycle of the square system.
In particular, the existence of SPMs implementing the above measurement processes without work cost would contradict the second law of thermodynamics.

The measurement processes used here do not disturb the input states.
Moreover, since they are strongly repeatable and do not decrease \(S_\mathrm{acc}^A\) of the input states, the positive work extraction in this cycle reflects the failure of concavity of \(S_\mathrm{acc}^A\) in the square system.
Furthermore, since the measurement processes are also fine-grained, the work extraction can equivalently be understood as arising from the discrepancy between \(S_\mathrm{meas}^A\) and \(S_\mathrm{acc}^A\) in the square system.
See Ref.~\cite{Kimura2010} for explicit forms of \(S_{\mathrm{meas}}\) and \(S_{\mathrm{acc}}\) in the square model.

\subsubsection{Regular hexagon system}

We next consider the regular hexagon system \cite{Janotta2011},
\begin{equation}
    \mathrm{St}\p A\coloneq\mathrm{conv}\B{\rho_n\coloneq\p{\cos\frac{n\pi}3,\sin\frac{n\pi}3,1}\in\mathds R^3\mid n\in\mathds Z}.
\end{equation}

We consider the SPM cycle illustrated in Fig.~\ref{fig:hexagon}.
Initially, there are six containers, each containing an ideal gas of \(N/6\) particles with internal state \(\frac12\rho_{n-1}+\frac12\rho_{n+1}\) for \(n=0,\dots,5\).

\begin{figure}
    \begin{tikzpicture}[scale=1.25]
        \def\scale{0.5}
        \draw[left](-0.75,-0.75)node{(i)};
        \draw[left](-0.75,-3)node{(ii)};
        \draw(-0.5,-3)node{\(\begin{cases}\\\\\\\\\\\\\end{cases}\)};
        \draw[left](-0.75,-5.25)node{(iii)};
        \def\hexagon{
            \coordinate(a0)at({cos((\x+0)*60)},{sin((\x+0)*60)});
            \coordinate(a1)at({cos((\x+1)*60)},{sin((\x+1)*60)});
            \coordinate(a2)at({cos((\x+2)*60)},{sin((\x+2)*60)});
            \coordinate(a3)at({cos((\x+3)*60)},{sin((\x+3)*60)});
            \coordinate(a4)at({cos((\x+4)*60)},{sin((\x+4)*60)});
            \coordinate(a5)at({cos((\x+5)*60)},{sin((\x+5)*60)});
            \draw(a0)--(a1)--(a2)--(a3)--(a4)--(a5)--cycle;
        }
        \foreach\x in{0,...,5}{
            \draw(\x,0)node{
                \begin{tikzpicture}[scale=\scale]
                    \hexagon
                    \draw[dotted](a0)--(a3);
                    \draw[dotted](a1)--(a5);
                    \fill({cos(\x*60)/2},{sin(\x*60)/2})circle[radius=3pt];
                    \draw(0,0)node{\(\scriptstyle\frac N6\)};
                \end{tikzpicture}
            };
            \draw[->](\x,-0.5)--(\x,-1);
            \draw[left](\x,-0.75)node{s.};
            \draw(\x,-1.5)node{
                \begin{tikzpicture}[scale=\scale]
                    \hexagon
                    \draw[dotted](a0)--(a3);
                    \draw[dotted](a1)--(a5);
                    \fill(a0)circle[radius=3pt];
                    \fill(a3)circle[radius=3pt];
                    \draw({cos(\x*60)/2},{sin(\x*60)/2})node{\(\scriptstyle\frac N8\)};
                    \draw(-{cos(\x*60)/2},-{sin(\x*60)/2})node{\(\scriptstyle\frac N{24}\)};
                \end{tikzpicture}
            };
        };
        \draw[->](-0.1,-2)--(-0.1,-2.5);
        \draw[->](0.1,-2)--(2.9,-2.5);
        \draw[->](0.9,-2)--(0.9,-2.5);
        \draw[->](1.1,-2)--(3.9,-2.5);
        \draw[->](1.9,-2)--(1.9,-2.5);
        \draw[->](2.1,-2)--(4.9,-2.5);
        \draw[->](3.1,-2)--(3.1,-2.5);
        \draw[->](2.9,-2)--(0.1,-2.5);
        \draw[->](4.1,-2)--(4.1,-2.5);
        \draw[->](3.9,-2)--(1.1,-2.5);
        \draw[->](5.1,-2)--(5.1,-2.5);
        \draw[->](4.9,-2)--(2.1,-2.5);
        \foreach\x in{0,...,5}\draw(\x,-3)node{
            \begin{tikzpicture}[scale=\scale]
                \hexagon
                \fill(a0)circle[radius=3pt];
                \draw({cos(\x*60)/2},{sin(\x*60)/2})node{\(\scriptstyle\frac N6\)};
            \end{tikzpicture}
        };
        \draw[->](-0.1,-3.5)--(0.9,-4);
        \draw[->](0.1,-3.5)--(4.9,-4);
        \draw[->](1.1,-3.5)--(1.9,-4);
        \draw[->](0.9,-3.5)--(-0.1,-4);
        \draw[->](2.1,-3.5)--(2.9,-4);
        \draw[->](1.9,-3.5)--(1.1,-4);
        \draw[->](3.1,-3.5)--(3.9,-4);
        \draw[->](2.9,-3.5)--(2.1,-4);
        \draw[->](4.1,-3.5)--(5.1,-4);
        \draw[->](3.9,-3.5)--(3.1,-4);
        \draw[->](4.9,-3.5)--(0.1,-4);
        \draw[->](5.1,-3.5)--(4.1,-4);
        \foreach\x in{0,...,5}{
            \draw(\x,-4.5)node{
                \begin{tikzpicture}[scale=\scale]
                    \hexagon
                    \draw[dotted](a0)--(a3);
                    \draw[dotted](a1)--(a5);
                    \fill(a1)circle[radius=3pt];
                    \fill(a5)circle[radius=3pt];
                    \draw({cos((\x+1)*60)/2},{sin((\x+1)*60)/2})node{\(\scriptstyle\frac N{12}\)};
                    \draw({cos((\x-1)*60)/2},{sin((\x-1)*60)/2})node{\(\scriptstyle\frac N{12}\)};
                \end{tikzpicture}
            };
            \draw[->](\x,-5)--(\x,-5.5);
            \draw[left](\x,-5.25)node{m.};
            \draw(\x,-6)node{
                \begin{tikzpicture}[scale=\scale]
                    \hexagon
                    \draw[dotted](a0)--(a3);
                    \draw[dotted](a1)--(a5);
                    \fill({cos(\x*60)/2},{sin(\x*60)/2})circle[radius=3pt];
                    \draw(0,0)node{\(\scriptstyle\frac N6\)};
                \end{tikzpicture}
            };
        };
    \end{tikzpicture}
    \caption{An SPM cycle from which positive work \(W_\mathrm{ext}=\frac N4\beta^{-1}\log\frac{27}{16}\) can be extracted for a gas whose internal GPT state space is a regular hexagon. In the figure, ``s.'' and ``m.'' denote the separation and mixing protocols (Fig.~\ref{fig:separation and mixing}), respectively.}
    \label{fig:hexagon}
\end{figure}

We first perform the separation protocol (Fig.~\ref{fig:separation and mixing}(a)) on each gas using SPMs implementing measure-and-prepare processes for the perfectly distinguishable pairs \(\p{\rho_n,\rho_{n+3}}\) for \(n=0,\dots,5\).
After this step, each original container is separated into two containers: one containing \(N/8\) particles in state \(\rho_n\), and the other containing \(N/24\) particles in state \(\rho_{n+3}\) (Fig.~\ref{fig:hexagon}(i)).
The work required for this separation step is
\begin{equation}
    N\beta^{-1}H\p{\p{\frac34,\frac14}}.
\end{equation}

Next, containers containing gases with the same internal state are combined and repartitioned, yielding six pairs of containers, each containing \(N/12\) particles in state \(\rho_{n-1}\) and \(N/12\) particles in state \(\rho_{n+1}\) for \(n=0,\dots,5\) (Fig.~\ref{fig:hexagon}(ii)).

Finally, we perform the mixing protocol (Fig.~\ref{fig:separation and mixing}(b)) on each pair using SPMs implementing measure-and-prepare processes for the perfectly distinguishable pairs \(\p{\rho_{n-1},\rho_{n+1}}\) for \(n=0,\dots,5\).
After this step, the system returns to the initial state (Fig.~\ref{fig:hexagon}(iii)).
The work extracted in this mixing step is
\begin{equation}
    N\beta^{-1}H\p{\p{\frac12,\frac12}}.
\end{equation}

Therefore, the total extractable work of the cycle is again
\begin{equation}
    W_\mathrm{ext}=\frac N4\beta^{-1}\log\frac{27}{16}>0.
\end{equation}
Thus, we again obtain positive work extraction from the SPM cycle, implying that the existence of SPMs implementing the above measurement processes without work cost would contradict the second law of thermodynamics.

As in the square system, the measurement processes used here do not disturb the input states.
Moreover, since they are strongly repeatable fine-grained measurement processes, the positive work extraction in this cycle reflects the discrepancy between \(S_\mathrm{meas}^A\) and \(S_\mathrm{acc}^A\) in the regular hexagon system.

This cycle also provides a concrete realization of the result of Minagawa \textit{et al.}\ \cite{Minagawa2022}, which states that positive work can be extracted through an SPM cycle in systems containing states whose entropy, defined through decompositions into perfectly distinguishable pure states, is not uniquely determined.

\subsection{Relation to previous works}

In this section, we discuss the relation to previous works on SPM models \cite{Neumann1927,Neumann1932,Petz2001,Hanggi2013,Barnum2015,Krumm2015,Krumm2017,Takakura2019,Minagawa2022}.

\subsubsection{Von Neumann \cite{Neumann1927,Neumann1932,Petz2001}}

Research on SPM models originates from the thought experiments of von Neumann \cite{Neumann1927,Neumann1932}.
He derived the von Neumann entropy as the thermodynamic entropy of quantum systems by considering the following process.
Consider an ideal gas consisting of \(N\lambda_x\) particles whose internal quantum states are given by mutually orthogonal pure states \(\ket{\varphi_x}\) (\(x\in X\), \(\sum_{x\in X}\lambda_x=1\)).
He considered a process in which the gases are first separated at constant volume by SPMs and then compressed so that the total volume returns to its original value.
Since the first process is merely a parallel translation for each particle, the required work is zero.
Assuming further that the final pure states have zero entropy, one obtains that the entropy of the state \(\rho=\sum_{x\in X}\lambda_x\ket{\varphi_x}\bra{\varphi_x}\) is given by \(-\sum_{x\in X}\lambda_x\log\lambda_x\).

Furthermore, assuming thermodynamics, he argued that there cannot exist an SPM that separates two nonorthogonal pure states \(\ket\varphi\) and \(\ket\psi\) (note that such a process does not correspond to a measurement process).
Consider an ideal gas consisting of \(N/2\) particles in the internal states \(\ket\varphi\) and \(\ket\psi\), respectively.
Suppose that there exists an SPM that transmits only one of the two states.
By quasistatically moving such an SPM from the edge of the container to the center, one obtains a process in which the required mechanical work must coincide with the heat calculated from the change of the von Neumann entropy.
From this consistency condition, one concludes that \(\ket\varphi\) and \(\ket\psi\) must be orthogonal.

Von Neumann justified that separations using SPMs may be regarded as thermodynamically costless as follows \cite{Neumann1932}.
First, if the state is one of the states \(\ket{\varphi_x}\), the measurement implemented by the SPM does not disturb the state and therefore presumably leaves no trace on the measuring apparatus (the SPM).
Second, unlike Maxwell's demon, the SPM uses for feedback only the information about the internal state of the particle, and does not distinguish from which side of the SPM the particle came.
A related interpretation was later proposed by Maruyama \textit{et al.}\ \cite{Maruyama2005}, who argued that the information about the measurement outcomes in the SPM process can be erased without thermodynamic cost because it remains perfectly correlated with the internal states of the particles, unlike in the Szilard engine \cite{Szilard1929}.

The above argument of von Neumann derives the von Neumann entropy under the assumption that SPMs performing projective measurement processes without thermodynamic cost exist, provided that the input state is not disturbed.
By contrast, in our work, we prove the converse statement: the existence of the von Neumann entropy guarantees that even if SPMs performing L\"uders measurement processes for free exist, they do not contradict the second law of thermodynamics.
Here, we do not assume that the input state is undisturbed.

\subsubsection{H\"anggi and Wehner \cite{Hanggi2013}}

H\"anggi and Wehner \cite{Hanggi2013} considered the following cycle.
First, they considered two gases whose internal states are distinct mixed states, and mixed them using SPMs implementing a rank-\(1\) L\"uders measurement process, until the internal state became uniform.
Next, they used another SPM to decompose the gases into perfectly distinguishable pure (PDP) states.
Finally, they transformed these states into different pure states by reversible processes and mixed them again to recover the original two states.

Unlike other previous works \cite{Neumann1927,Neumann1932,Petz2001,Barnum2015,Krumm2015,Krumm2017,Minagawa2022}, which consider only measurement processes that do not disturb the input states, no such restriction is imposed here.

They showed that, in quantum systems, the work extractable from the above cycle is bounded by a quantity characterizing the uncertainty of the probability distributions of the measurement outcomes.
From this, they concluded that if the condition known as the fine-grained uncertainty relations (FGUR) \cite{Oppenheim2010}, which is satisfied in quantum theory, were violated, then positive work could be extracted.
However, as they themselves pointed out, one should note that even within the GPT framework, it is impossible to modify the theory in such a way that only the FGUR is violated while all other aspects of quantum theory remain unchanged.

They also extended the above bound to a restricted class of GPT systems satisfying pure transitivity, PDP decomposability, and a certain form of self-duality, and showed that the work extractable from the cycle is bounded by an uncertainty term.
However, unlike in the quantum case, no explicit bound on the uncertainty term was obtained in GPTs, and therefore it remained unclear whether positive work extraction is actually possible.

They also mentioned the possibility that discrepancies between different definitions of entropy in GPTs could lead to violations of the second law of thermodynamics, although no detailed analysis of this issue was given.

Our results provide a bound on the work extractable from the H\"anggi--Wehner cycle as well.
Indeed, the measurement processes appearing in their cycle are strongly repeatable fine-grained measurement processes.
Therefore, in systems where the measurement entropy \(S_\mathrm{meas}\) and the accessible information entropy \(S_\mathrm{acc}\) coincide, no positive work can be extracted from the cycle.

\subsubsection{Krumm \textit{et al.}\ \cite{Barnum2015,Krumm2015,Krumm2017,Takakura2019}}

By generalizing von Neumann's thought experiment, Krumm \textit{et al.}\ \cite{Krumm2015,Krumm2017} derived conditions that the ``thermodynamic entropy'' of an internal GPT system must satisfy.
Following von Neumann's argument, they assumeed that the decomposition into PDP states can be implemented by an SPM without thermodynamic cost.
Under this assumption, they showed that the condition in Eq.~\eqref{Krumm's condition for entropy}, namely,
\begin{equation}
    S^A\p{\mathrm{tr}_X\rho^{XA}}=\tilde S^{XA}\p{\rho^{XA}}
\end{equation}
must hold for any ensemble of perfectly distinguishable states \(\rho^{XA}=\p{\rho_x^A}_{x\in X}\).
In particular, under the assumption of pure transitivity and PDP decomposability, they derived the following expression for the thermodynamic entropy:
\begin{equation}
\label{PDP decomposition entropy}
    S^A\p{\rho^A}=H^X\p{\mathrm{tr}_A\rho^{XA}},
\end{equation}
where \(\rho^{XA}=\p{\rho_x^A}_{x\in X}\in\mathrm{Ens}_\mathrm{pure}\p{\rho^A}\) is an ensemble of PDP states.

They further showed that no entropy satisfying Eq.~\eqref{Krumm's condition for entropy} exists for the square system.
Later, Takakura \cite{Takakura2019} extended this result to all non-classical regular polygon systems.

Krumm \textit{et al.}\ \cite{Krumm2015,Krumm2017} also argued that the thermodynamic entropy must be concave by considering the free mixing process.

Furthermore, assuming PDP decomposability and a stronger version of pure transitivity, they showed that the entropy defined by Eq.~\eqref{PDP decomposition entropy} is well defined, satisfies the above conditions, coincides with the measurement entropy \(S_\mathrm{meas}\), and is bounded below by the mixing entropy \(S_\mathrm{mix}\).
They also introduced a natural generalization of L\"uders measurement processes in such systems and showed that such measurements do not decrease the entropy defined by Eq.~\eqref{PDP decomposition entropy}.

Our results complement those of Krumm \textit{et al.}
More precisely, we show that if there exists an entropy satisfying the condition in Eq.~\eqref{Krumm's condition for entropy}, then positive work cannot be extracted from any SPM cycle implementing decompositions into perfectly distinguishable states.
Moreover, combining Corollary \ref{cor:strong SPM condition} with the results of Krumm \textit{et al.}\ \cite{Krumm2015,Krumm2017}, it follows that positive work cannot be extracted from SPM cycles implementing the generalized L\"uders measurement processes in systems satisfying PDP decomposability and the strong pure transitivity, even without assuming that the input states are left undisturbed.

\subsubsection{Minagawa \textit{et al.}\ \cite{Minagawa2022}}

Minagawa et al. \cite{Minagawa2022} considered systems that are pure transitive and that admit at least one state possessing a PDP decomposition.
Following von Neumann's argument, they assumed that decompositions into PDP states can be implemented by an SPM without any thermodynamic cost.
In such systems, they showed that if there exists a state \(\rho^A\) admitting two or more distinct PDP decompositions \(\rho^{XA},\sigma^{YA}\in\mathrm{Ens}_{\mathrm{pure}}\p{\rho^A}\) that induce different entropies, \textit{i.e.},
\begin{equation}
    H^X\p{\mathrm{tr}_A\rho^{XA}}\ne H^Y\p{\mathrm{tr}_A\sigma^{YA}},
\end{equation}
then positive work can be extracted from an SPM cycle using that state.

However, they only provided an example of a system that admits a state with multiple PDP decompositions, and did not construct any example of a system that is both pure transitive and admits a state with multiple PDP decompositions.
We show that the regular hexagon model satisfies these conditions, and we provide the first explicit construction of an SPM cycle from which positive work can be extracted.
Furthermore, we construct another example of an SPM cycle from which positive work can be extracted by using the square model, which does not admits a state with multiple PDP decompositions.

\section{Discussion}

In this work, we formulated information thermodynamics in generalized probabilistic theories (GPTs) and showed that the second law of information thermodynamics holds in GPTs for entropies satisfying subadditivity.
We also showed that, in semipermeable-membrane (SPM) models, the extractable work is bounded by entropy changes associated with the measurement processes.
These results imply that, even in GPTs, work exceeding the restriction imposed by the second law of thermodynamics cannot be extracted as long as the processes involved are consistent with entropy nondecrease.

We also clarified situations in which such entropies naturally exist.
In particular, the measurement entropy satisfies the required properties for important classes of composite systems, including standard quantum tensor products and compositions in which all but one subsystem are classical.

Our results stand in contrast to previous studies on thermodynamics in GPTs.
Earlier works mainly assumed the existence of semipermeable membranes (SPMs) that implement certain measurements without thermodynamic cost and analyzed particular SPM cycles individually.
By contrast, we analyzed general SPM processes from the viewpoint of information thermodynamics and derived general conditions under which such cost-free measurements are consistent with thermodynamics.
In this sense, our work provides a unified framework for understanding previous SPM-based arguments in terms of entropy changes associated with measurement processes.

Furthermore, we provided the first explicit examples of GPT systems in which the existence of SPMs performing measurements for free can lead to violations of the second law of thermodynamics.
In our examples, positive work extraction originates either from the failure of fundamental entropy properties or from discrepancies between different definitions of entropy.
In particular, we derived conditions under which measurement processes do not decrease the measurement entropy and conditions under which forgetting the measurement outcomes does not decrease the accessible information entropy.
Therefore, if positive work can nevertheless be extracted under these conditions, the origin of the contradiction must lie in the mismatch between these two different definitions of entropy.
The square and regular hexagon systems provide concrete realizations of this mechanism.
However, these examples do not imply that the existence of such GPT systems themselves contradicts the second law of thermodynamics; rather, they rule out the existence of SPMs that implement the corresponding measurement processes without thermodynamic cost.

Our results also provide a new perspective on the role of the von Neumann entropy.
Von Neumann \cite{Neumann1927,Neumann1932} derived the entropy formula by assuming that projective measurement processes can be implemented without thermodynamic cost.
In contrast, our results suggest that the entropy itself may characterize the class of measurement processes that can be implemented thermodynamically for free.
In particular, the von Neumann entropy guarantees that L\"uders measurement processes are consistent with thermodynamics in quantum systems.
This further suggests the possibility that the von Neumann entropy may be characterized as the entropy that maximizes the class of ``thermodynamically free'' measurement processes.

On the other hand, several important open problems remain.
First, our sufficient conditions for the validity of the second law rely on structural assumptions such as subadditivity, and it remains unclear how generally entropies satisfying such conditions exist in GPTs.
In particular, when both the measured system and the memory system are nonclassical GPT systems, there may exist no nontrivial entropy for which the second law of information thermodynamics holds.
Second, although we analyzed SPM models from the viewpoint of information thermodynamics, the precise relation between the two frameworks is still not fully understood.
In particular, the results for SPM models are not direct consequences of the results for information thermodynamics, and a more unified understanding is desirable.
Finally, our results suggest that, in order to obtain genuinely constraining conditions on GPTs from thermodynamic consistency, it may be necessary to go beyond information-theoretic scenarios in which GPT systems are accessed only through measurements, and to investigate more general physical settings arising in thermodynamics and statistical mechanics.

\begin{acknowledgments}
The authors thank Hayato Arai and Francesco Buscemi for fruitful discussions.
K. O. was supported by Advanced Basic Science Course, a World-leading Innovative Graduate Study Program, the University of Tokyo.
S. U. was supported by Forefront Physics and Mathematics Program to Drive Transformation, a World-leading Innovative Graduate Study Program, the University of Tokyo.
H. T. was supported by JSPS Grants-in-Aid for Scientific Research No. JP25K00924, MEXT KAKENHI Grant-in-Aid for Transformative
Research Areas B ``Quantum Energy Innovation” Grant Numbers 24H00830 and 24H00831, JST FOREST No. JPMJFR2365, JST MOONSHOT No. JPMJMS256E and Royal Society International Collaboration Awards 2025 Flexigrant number ICA/R2/252240.
\end{acknowledgments}

\section*{Author contributions}

K. O. and S. U. contributed equally to this work.

\clearpage
\appendix
\widetext

\section{Proofs of the properties of entropies}
\label{app:entropies}
In this appendix, we provide the proofs of the properties of entropies in GPTs that are stated in the main text.

\begin{proof}[Proof of Proposition~\ref{prop:consistency with the direct sum}]
    For brevity, we write \(\overline{XA}\coloneq\bigoplus_{x\in X}A_x\).
    \begin{enumerate}
        \item
        Every effect \(e\in\mathrm{Eff}\p{\overline{XA}}\) can be written in the form \(e\p{\rho_x^{A_x}}_{x\in X}=\sum_{x\in X}e_x\rho_x^{A_x}\) where \(e_x\in\mathrm{Eff}\p{A_x}\).
        This implies that if \(e\) is indecomposable, then there exists some \(x\in X\) such that \(e\p{\rho_x^{A_x}}_{x\in X}=e_x\rho_x^{A_x}\).
        Thus, any fine-grained measurement \(\mathcal E\in\mathrm{Meas}_{\mathrm{fg}}\p{\overline{XA}}\) can be written as \(\mathcal E\p{\rho_x^{A_x}}_{x\in X}=\sum_{x\in X}\mathcal E_x\rho_x^{A_x}\) using fine-grained measurements \(\mathcal E_x\in\mathrm{Meas}_{\mathrm{fg}}\p{A_x}\) (\(x\in X\)).
        Therefore, we obtain
        \begin{equation*}
            S^{\overline{XA}}\p{\p{\rho_x^{A_x}}_{x\in X}}=H^X\p{\p{\V{\rho_x^{A_x}}}_{x\in X}}+\aab{S^{A_x}}_{x\in X}\p{\p{\rho_x^{A_x}}_{x\in X}}.
        \end{equation*}
        \item
        For an arbitrary classical system \(Z\), we have
        \begin{equation*}
            \begin{aligned}
                I_\mathrm{acc}^{Z:\overline{XA}}\p{\rho^{Z\overline{XA}}}&=\sup_{\mathcal E:\overline{XA}\to Y}I^{Z:Y}\p{\p{\mathrm{id}_{X'}\otimes\mathcal E}\rho^{Z\overline{XA}}}\\
                &=\sup_{\p{\mathcal E_x:A_x\to Y_x}_{x\in X}}I^{Z:\bigoplus_{x\in X}Y_x}\p{\p{\p{\mathrm{id}_Z\otimes\mathcal E_x}\rho_x^{ZA_x}}_{x\in X}}\\
                &=\sup_{\p{\mathcal E_x:A_x\to Y_x}_{x\in X}}\p{I^{Z:X}\p{\rho^{ZX}}+\aab{I^{Z:Y_x}}_{x\in X}\p{\p{\p{\mathrm{id}_Z\otimes\mathcal E_x}\rho_x^{ZA_x}}_{x\in X}}}\\
                &=I^{Z:X}\p{\rho^{ZX}}+\sum_{x\in X}\V{\rho_x^{ZA_x}}\sup_{\mathcal E_x:A_x\to Y_x}I^{Z:Y_x}\p{\p{\mathrm{id}_Z\otimes\mathcal E_x}\hat\rho_x^{ZA_x}}\\
                &=I^{Z:X}\p{\rho^{ZX}}+\aab{I_\mathrm{acc}^{Z:A_x}}_{x\in X}\p{\rho^{Z\overline{XA}}}.
            \end{aligned}
        \end{equation*}
        Therefore, we obtain
        \begin{equation*}
            \begin{aligned}
                {S'}^{\overline{XA}}\p{\rho^{\overline{XA}}}&=\sup_{\rho^{Z\overline{XA}}\in\mathrm{Ens}\p{\rho^{\overline{XA}}}}I_\mathrm{acc}^{Z:\overline{XA}}\p{\rho^{Z\overline{XA}}}\\
                &=\sup_{\rho^{Z\overline{XA}}\in\mathrm{Ens}\p{\rho^{\overline{XA}}}}\p{I^{Z:X}\p{\rho^{ZX}}+\aab{I_\mathrm{acc}^{Z:A_x}}_{x\in X}\p{\rho^{Z\overline{XA}}}}\\
                &=H^X\p{\rho^X}+\sum_{x\in X}\V{\rho_x^{A_x}}\sup_{\rho_x^{Z_xA_x}\in\mathrm{Ens}\p{\hat\rho_x^{A_x}}}I_\mathrm{acc}^{Z_x:A_x}\p{\rho_x^{Z_xA_x}}\\
                &=H^X\p{\rho^X}+\aab{{S'}^{A_x}}_{x\in X}\p{\rho^{\overline{XA}}}.
            \end{aligned}
        \end{equation*}
        \item
        If the entropy \(S\) is consistent with the direct sum, we find
        \begin{equation*}
            \begin{aligned}
                {S'}^{\overline{XA}}\p{\rho^{\overline{XA}}}&=\sup_{\rho^{Z\overline{XA}}\in\mathrm{Ens}\p{\rho^{\overline{XA}}}}\p{I_\mathrm{acc}^{Z:\overline{XA}}\p{\rho^{Z\overline{XA}}}+\aab{S^{\overline{XA}}}_Z\p{\rho^{Z\overline{XA}}}}\\
                &=\sup_{\rho^{Z\overline{XA}}\in\mathrm{Ens}\p{\rho^{\overline{XA}}}}\biggl(I^{Z:X}\p{\rho^{ZX}}+\aab{I_\mathrm{acc}^{Z:A_x}}_{x\in X}\p{\rho^{Z\overline{XA}}}\\
                &\hspace{10em}{}+\aab{H^X}_Z\p{\rho^{ZX}}+\aab{\aab{S^{A_x}}_{x\in X}}_Z\p{\rho^{Z\overline{XA}}}\biggr)\\
                &=H^X\p{\rho^X}+\sum_{x\in X}\V{\rho_x^{A_x}}\sup_{\rho_x^{Z_xA_x}\in\mathrm{Ens}\p{\hat\rho_x^{A_x}}}\p{I_\mathrm{acc}^{Z_x:A_x}\p{\rho_x^{Z_xA_x}}+\aab{S^{A_x}}_{Z_x}\p{\rho_x^{Z_xA_x}}}\\
                &=H^X\p{\rho^X}+\aab{{S'}^{A_x}}_{x\in X}\p{\rho^{\overline{XA}}}.
            \end{aligned}
        \end{equation*}
        \qedhere
    \end{enumerate}
\end{proof}

\begin{proof}[Proof of Proposition~\ref{prop:measurement entropy non-decresing measurement processes}]
We denote \(\mathcal M=\p{\mathcal M_x}_{x\in X}\) and take a fine-grained measurement \(\mathcal E^x=\p{e^x_y}_{y\in Y_x}\in\mathrm{Meas}_\mathrm{fg}\p A\) such that \(S_\mathrm{meas}\p{\hat{\mathcal M_x\rho^A}}=H^{Y_x}\p{\mathcal E^x\hat{\mathcal M_x\rho^A}}\) for each \(x\in X\).
Since \(\mathcal M\) is repeatable, we have
\begin{equation*}
    \p{e^{x'}_y\mathcal M_{x'}}\p{\mathcal M_x\rho^A}=\delta_{x,x'}e^x_y\p{\mathcal M_x\rho^A}.
\end{equation*}
Thus, the measurement \(\mathcal E\coloneq\p{e^{x'}_y\mathcal M_{x'}}_{x'\in X,y\in Y_{x'}}\) satisfies
\begin{equation*}
    H^{Y_x}\p{\mathcal E^x\hat{\mathcal M_x\rho^A}}=H^{\bigoplus_{x'\in X}Y_{x'}}\p{\mathcal E\hat{\mathcal M_x\rho^A}}.
\end{equation*}
Therefore, we obtain
\begin{equation*}
    \begin{aligned}
        S_\mathrm{meas}^{XA}\p{\mathcal M\rho^A}&=\tilde S_{\mathrm{meas}}^{XA}\p{\mathcal M\rho^A}\\
        &=H^X\p{\mathds1_A\mathcal M_x\rho^A}+\aab{S_{\mathrm{meas}}}_X\p{\mathcal M\rho}\\
        &=H^X\p{\mathds1_A\mathcal M_x\rho^A}+\aab{H^{\bigoplus_{x'\in X}Y_{x'}}}_X\p{\p{\mathrm{id}_X\otimes\mathcal E}\mathcal M\rho^A}\\
        &=H^{X\bigoplus_{x'\in X}Y_{x'}}\p{\p{\mathrm{id}_X\otimes\mathcal E}\mathcal M\rho^A}\\
        &=H^{\bigoplus_{x'\in X}Y_{x'}}\p{\mathcal E\rho^A}.
    \end{aligned}
\end{equation*}
Due to \(e^{x'}_y\mathcal M_{x'}\le\mathds1_A\mathcal M_{x'}\), \(e^{x'}_y\mathcal M_{x'}\) is indecomposable whenever \(\mathds1_A\mathcal M_{x'}\) is indecomposable.
This implies that \(\mathcal E\) is fine-grained.
Therefore, we obtain \(S_\mathrm{meas}^A\p{\rho^A}\le S_\mathrm{meas}^{XA}\p{\mathcal M\rho^A}\).
\end{proof}

\begin{proof}[Proof of Proposition~\ref{prop:entropy nondecreasing under information-preserving processes}]
Suppose that \(\rho^{XA}\in\mathrm{Ens}\p{\rho^A}\) and a process \(\mathcal G:B\to A\) satisfies \(\p{\mathrm{id}_X\otimes\mathcal G\mathcal F}\rho^{XA}=\rho^{XA}\).
Then,
\begin{equation*}
    I^{X:A}_\mathrm{acc}\p{\rho^{XA}}\ge I^{X:B}_{\mathrm{acc}}\p{\p{\mathrm{id}_X\otimes\mathcal F}\rho^{XA}}
    \ge I^{X:A}_{\mathrm{acc}}\p{\p{\mathrm{id}_X\otimes\mathcal G}\p{\mathrm{id}_X\otimes\mathcal F}\rho^{XA}}
    =I^{X:A}_{\mathrm{acc}}\p{\rho^{XA}},
\end{equation*}
and hence \(I^{X:B}_{\mathrm{acc}}\p{\p{\mathrm{id}_X\otimes\mathcal F}\rho^{XA}}=I^{X:A}_\mathrm{acc}\p{\rho^{XA}}\).
Therefore, since \(S\) is nondecreasing under information-preserving processes,
\begin{equation*}
    {S'}^B\p{\mathcal F\rho^A}\ge I^{X:B}_\mathrm{acc}\p{\p{\mathrm{id}_X\otimes\mathcal F}\rho^{XA}}+\aab{S^B}_X\p{\p{\mathrm{id}_X\otimes\mathcal F}\rho^{XA}}\ge I^{X:A}_{\mathrm{acc}}\p{\rho^{XA}}+\aab{S^A}_X\p{\rho^{XA}}.
\end{equation*}
for every \(\rho^{XA}\in\mathrm{Ens}\p{\rho^A}\). This implies \({S'}^B\p{\mathcal F\rho^A}\ge{S'}^A\p{\rho^A}\).
\end{proof}

\begin{proof}[Proof of Proposition~\ref{prop:strogly repeatable implies information preserving}]
We first show that, for any state \(\rho^A\in\mathrm{St}\p{A}\), a strongly repeatable measurement \(\mathcal{M}=\p{\mathcal{M}_x}_{x\in X}\) satisfies \(\p{\mathrm{id}_Y\otimes \mathcal{M}\,\mathrm{tr}_X}\sigma^{YXA}=\sigma^{YXA}\) for any \(\sigma^{YXA}\in \mathrm{Ens}\p{\mathcal{M}\rho}\).
Since \(\sum_{y\in Y}\sigma^A_{x,y}=\mathcal{M}_x\rho^A\), we obtain
\begin{equation*}
    \p{\mathrm{id}_Y\otimes\mathcal M}\p{\mathrm{tr}_X\sigma^{YXA}}
    =\p{\mathcal M_x\sum_{x'\in X}\sigma^A_{x',y}}_{x\in X,y\in Y}
    =\p{\sigma^A_{x,y}}_{x\in X, y\in Y}
    =\sigma^{YXA}.
\end{equation*}

We next assume that, for any state \(\rho^A\in\mathrm{St}\p{A}\), a measurement \(\mathcal{M}=\p{\mathcal{M}_x}_{x\in X}\) satisfies \(\p{\mathrm{id}_Y\otimes \mathcal{M}\,\mathrm{tr}_X}\sigma^{YXA}=\sigma^{YXA}\) for any \(\sigma^{YXA}\in \mathrm{Ens}\p{\mathcal{M}\rho}\), and show that \(\mathcal{M}\) is strongly repeatable.
For each unnormazied state \(0\le\sigma^A\le\mathcal{M}_x\rho^A\), we define an ensemble of states \(\sigma^{YXA}=\p{\sigma^A_{x',y}}_{x'\in X,y\in Y}\in \mathrm{Ens}\p{\mathcal{M}\rho^A}\) by
\begin{equation*}
    Y\coloneq\B{0,1},\quad\sigma_{x',y}^A\coloneq\left\{
    \begin{aligned}
        &\sigma^A&\p{x'=x,y=0}\\
        &\mathcal{M}_x\rho^A-\sigma^A&\p{x'=x,y=1}\\
        &0 &\p{x'\ne x,y=0}\\
        &\mathcal{M}_{x'}\rho^A&\p{x'\ne x,y=1}
    \end{aligned}
    \right..
\end{equation*}
Then, from \(\p{\mathrm{id}_Y\otimes \mathcal{M}\,\mathrm{tr}_X}\sigma^{YXA}=\sigma^{YXA}\), we obtain
\begin{equation*}
    \mathcal{M}_x\sigma^A=\mathcal{M}_x\mathrm{tr}_X\p{\sigma_{x',0}^A}_{x'\in X}=\sigma_{x,0}^A=\sigma^A.
\end{equation*}
\end{proof}

\section{Proof of the generalized Sagawa--Ueda inequalities}
\label{app:Sagawa--Ueda}

In this appendix, we provide the proof of the generalized Sagawa--Ueda inequalities presented in the main text.
\begin{proof}[Proof of Theorem~\ref{thm:Sagawa--Ueda}]
For brevity, we write \(F_{\beta\p2}^A\coloneq F_\beta^A\p{\rho_{\p2}^A}\) and similarly for other quantities.
First, we have
\begin{equation*}
    \begin{aligned}
        W_\mathrm{ext}&=-F_{\beta\p2}^A-F_{\beta\p2}^{B_2}+F_{\beta\p0}^A+F_{\beta\p0}^{B_2}-\beta^{-1}\p{S^A_{\p2}+S^{B_2}_{\p2}-S^A_{\p0}-S^{B_2}_{\p0}}\\
        &\le-F_{\beta\p2}^A+F_{\beta\p0}^A-\beta^{-1}\p{S^A_{\p2}+S^{B_2}_{\p2}-S^A_{\p0}-S^{B_2}_{\p0}}\\
        &\le-F_{\beta\p2}^A+F_{\beta\p0}^A-\beta^{-1}\p{\tilde S^{KAB_2}_{\p2}-H^K_{\p2}-S^A_{\p0}-S^{B_2}_{\p0}}\\
        &=-\Delta F_\beta^A-\beta^{-1}\p{\Delta S_\mathcal{F}-I}.
    \end{aligned}
\end{equation*}
Next,
\begin{equation*}
    \begin{aligned}
        W_\mathrm{meas}&=F_{\beta\p1}^{KM}+F_{\beta\p1}^{B_1}-F_{\beta\p0}^M-F^{B_1}_{\p0}+\beta^{-1}\p{\tilde S^{KM}_{\p1}+S^{B_1}_{\p1}-S^M_{\p0}-S^{B_1}_{\p0}}\\
        &\ge F_{\beta\p1}^{KM}-F_{\beta\p0}^M+\beta^{-1}\p{\tilde S^{KM}_{\p1}+S^{B_1}_{\p1}-S^M_{\p0}-S^{B_1}_{\p0}}\\
        &\ge F_{\beta\p1}^{KM}-F_{\beta\p0}^M+\beta^{-1}\p{\tilde S^{KM}_{\p1}+\tilde S^{KAMB_1}_{\p1}-\tilde S^{KAM}_{\p1}-S^{AMB_1}_{\p0}+S^A_{\p0}}\\
        &\ge F_{\beta\p1}^{KM}-F_{\beta\p0}^M+\beta^{-1}\p{H^K_{\p1}+\tilde S^{KAMB_1}_{\p1}-\tilde S^{KA}_{\p1}-S^{AMB_1}_{\p0}+S^A_{\p0}}\\
        &=\aab{\Delta F_\beta^M}-\beta^{-1}\p{H-I-\Delta S_\mathcal{M}},
    \end{aligned}
\end{equation*}
where \(F_\beta^{KM}\coloneq E^M-\beta^{-1}\tilde S^{KM}\).
Finaly,
\begin{equation*}
    \begin{aligned}
        W_\mathrm{eras}&=F_{\beta\p3}^M+F_{\beta\p3}^{B_3}-F_{\beta\p2}^{KM}-F_{\beta\p2}^{B_3}+\beta^{-1}\p{S^M_{\p3}+S^{B_3}_{\p3}-\tilde S^{KM}_{\p2}-S^{B_3}_{\p2}}\\
        &\ge F_{\beta\p3}^M-F_{\beta\p2}^{KM}+\beta^{-1}\p{S^M_{\p3}+S^{B_3}_{\p3}-\tilde S^{KM}_{\p2}-S^{B_3}_{\p2}}\\
        &\ge F_{\beta\p3}^M-F_{\beta\p2}^{KM}+\beta^{-1}\p{S^{MB_3}_{\p3}-\tilde S^{KMB_3}_{\p2}}\\
        &=-\aab{\Delta F_\beta^M}+\beta^{-1}\p{H+\Delta S_\mathcal{V}}.
    \end{aligned}
\end{equation*}
\end{proof}

\section{Proofs for the SPM cycle}
\label{app:SPM model}
In this appendix, we provide proofs of the statements on the SPM cycle presented in the main text.

\begin{proof}[Proof of Lemma \ref{lem:SPM extractable work}]
    From Eq.~\eqref{eq:SPM extractable work}, we have
    \begin{equation*}
        \begin{aligned}
            W_\mathrm{ext}^{\p i}
            &=N\beta^{-1}\Biggl(\eval{\aab{H^{Z'^{\p i}_y K^{\p i}_y}}_{y\in Y^{\p i}}}_{\p{i,2}}-\eval{\aab{H^{Z^{\p i}_y K^{\p i}_y}}_{y\in Y^{\p i}}}_{\p{i,1}}\\
            &\hspace{6em}{}-\sum_{x\in X^{\p i}}\V{\rho_{\p{i,0}x}}\log V_x+\sum_{x\in X^{\p{i+1}}}\V{\rho_{\p{i+1,0}x}}\log V_x\Biggr),
        \end{aligned}
    \end{equation*}
    where \(W_\mathrm{ext}^{\p i}\) denote the maximal extractable work in the \(i\)-th step.
    Then we obtain
    \begin{equation*}
        \begin{aligned}
            W_\mathrm{ext}&\coloneq\sum_{i=0}^{n-1}W_\mathrm{ext}^{\p i}=N\beta^{-1}\sum_{i=0}^{n-1}\p{\eval{\aab{H^{Z'^{\p i}_y K^{\p i}_y}}_{y\in Y^{\p i}}}_{\p{i,2}}-\eval{\aab{H^{Z^{\p i}_y K^{\p i}_y}}_{y\in Y^{\p i}}}_{\p{i,1}}}\\
            &\le N\beta^{-1}\sum_{i=0}^{n-1}\p{\eval{\aab{H^{Z'^{\p i}_y K^{\p i}_y}+\aab{S^A}_{Z'^{\p i}_yK^{\p i}_y}}_{y\in Y^{\p i}}}_{\p{i,2}}-\eval{\aab{H^{Z^{\p i}_y K^{\p i}_y}+\aab{S^A}_{Z^{\p i}_yK^{\p i}_y}}_{y\in Y^{\p i}}}_{\p{i,1}}}\\
            &=N\beta^{-1}\sum_{i=0}^{n-1}\p{\eval{\tilde S^{\bigoplus_{y\in Y^{\p i}}Z'^{\p i}_yK^{\p i}_yA}}_{\p{i,2}}-\eval{\tilde S^{\bigoplus_{y\in Y^{\p i}}Z^{\p i}_yK^{\p i}_yA}}_{\p{i,1}}}\\
            &=N\beta^{-1}\sum_{i=0}^{n-1}\p{\eval{\tilde S^{\bigoplus_{y\in Y^{\p i}}Z^{\p i}_yA}}_{\p{i,0}}-\eval{\tilde S^{\bigoplus_{y\in Y^{\p i}}Z^{\p i}_yK^{\p i}_yA}}_{\p{i,1}}+\eval{\p{\tilde S^{\bigoplus_{y\in Y^{\p i}}Z'^{\p i}_yK^{\p i}_yA}-\tilde S^{\bigoplus_{y\in Y^{\p i}}Z'^{\p i}_yA}}}_{\p{i,2}}}\\
            &=N\beta^{-1}\sum_{i=0}^{n-1}\p{\eval{\aab{\aab{S^A}_{Z^{\p i}_y}}_{y\in Y^{\p i}}}_{\p{i,0}}-\eval{\aab{\aab{\tilde S^{K^{\p i}_yA}}_{Z^{\p i}_y}}_{y\in Y^{\p i}}}_{\p{i,1}}+\eval{\aab{\aab{\tilde S^{K^{\p i}_yA}-S^A}_{Z'^{\p i}_y}}_{y\in Y^{\p i}}}_{\p{i,2}}}.
        \end{aligned}
    \end{equation*}
\end{proof}

\begin{proof}[Proof of Theorem \ref{thm:weak SPM condition}]
    By definition, we have \(\rho_{\p{i,1}yz'}^{K^{\p i}_yA}=\mathcal M^{\p i}_y\rho_{\p{i,0}yz'}^A\).
    Moreover, since \(\mathcal M^{\p i}_y\) is repeatable, \(\rho_{\p{i,2}yz'}^{K^{\p i}_yA}=\mathcal M^{\p i}_y\rho_{\p{i,2}yz'}^A\) also holds.
    Therefore, it follows from Lemma \ref{lem:SPM extractable work} that Eq.~\eqref{eq:weak SPM condition} implies \(W_{\mathrm{ext}}\le0\).
\end{proof}

\bibliography{ref}

\end{document}